\newcommand{\ls}[1]     
{\dimen0=\fontdimen6\the=#1\dimen0
	\advance\lineskip.5\fontdimen5\the\lineskip-\dimen0
	\lineskiplimit=.9\lineskip
	\baselineskip=\lineskip     \advance\baselineskip\dimen0
	\normallineskip\lineskip
	\normallineskiplimit\lineskiplimit
	\normalbaselineskip\baselineskip
	\ignorespaces
}
\newcommand{\vs}{\vspace{-0.08in}}
\newtheorem{lemma}{Lemma}
\newtheorem{theorem}{Theorem}
\begin{document}
	

	\title{Enabling Failure-resilient Intermittent Systems Without Runtime Checkpointing}

	\author{Wei-Ming Chen,~\IEEEmembership{Student Member,~IEEE},~Tei-Wei-Kuo,~\IEEEmembership{Fellow,~IEEE},~and Pi-Cheng Hsiu,~\IEEEmembership{Senior Member,~IEEE}
	\thanks{W.-M. Chen is with the Department of Computer Science and Information Engineering, National Taiwan University, No. 1, Sec. 4, Roosevelt Rd., Taipei 10617, Taiwan, and also with Research Center for Information Technology Innovation (CITI), Academia Sinica, No. 128, Sec. 2, Academia Rd., Nankang Dist., Taipei 115, Taiwan (E-mail: D04922006@csie.ntu.edu.tw).}
	\thanks{T.-W. Kuo is with the Department of Computer Science and Information Engineering, National Taiwan University, Taipei 106, Taiwan, and also with College of Engineering, City University of Hong Kong, 88 Tat Chee Avenue, Kowloon Tong, Hong Kong (Email: ktw@csie.ntu.edu.tw).}
	\thanks{P.-C. Hsiu is with the Research Center for Information Technology Innovation (CITI), and the Institute of Information Science (IIS), Academia Sinica, No. 128, Sec. 2, Academia Rd., Nankang Dist., Taipei 115, Taiwan, and also with the Department of Computer Science and Information Engineering, National Chi Nan University, No. 1, University Rd., Puli, Nantou 54561, Taiwan (E-mail: pchsiu@citi.sinica.edu.tw).}
	\thanks{A preliminary version of this paper was presented at the IEEE/ACM Design Automation Conference (DAC) 2019.}}

	\maketitle
	
	\ls{0.92} 
	
	\begin{abstract}
		
		Self-powered intermittent systems typically adopt runtime checkpointing as a means to accumulate computation progress across power cycles and recover system status from power failures. However, existing approaches based on the checkpointing paradigm normally require system suspension and/or logging at runtime. This paper presents a design which overcomes the drawbacks of checkpointing-based approaches, to enable failure-resilient intermittent systems. Our design allows accumulative execution and instant system recovery under frequent power failures while enforcing the serializability of concurrent task execution to improve computation progress and ensuring data consistency without system suspension during runtime, by leveraging the characteristics of data accessed in hybrid memory. We integrated the design into FreeRTOS running on a Texas Instruments device. Experimental results show that our design can still accumulate progress when the power source is too weak for checkpointing-based approaches to make progress, and improves the computation progress by up to 43\% under a relatively strong power source, while reducing the recovery time by at least 90\%.
		

	\end{abstract}
	
	\begin{IEEEkeywords}
		Data consistency, system recovery, serializability, concurrency, energy harvesting, intermittent systems
	\end{IEEEkeywords}

	\section{Introduction} \label{sec:Introduction}
	Applications based on smart embedded devices have become a ubiquitous part of daily life. However, powering such devices is a critical challenge because of their size restrictions and applications in large-scale scenarios. Energy harvesting has emerged as a promising alternative power source for these devices. To enable \emph{intermittent computing}, self-powered systems typically checkpoint execution progress and data residing in volatile memory (VM) to non-volatile memory (NVM) at runtime such that the systems can be recovered after power resumption. However, because ambient power sources suffer from frequent power failures, the overheads incurred by frequent checkpoints could significantly reduce system performance, thus increasing the difficulty of designing hardware chips and system software.

	Many attempts have been made to enable intermittent systems, which can survive in unstable power environments, at the level of hardware circuits, system architectures, and system software by efficiently checkpointing data residing in VM to NVM~\cite{7750989,Liu2747910,8519610}. To accumulate execution progress made in different power-on periods, \emph{non-volatile processors} (NVPs) have emerged as a potential solution by \emph{checkpointing} volatile states in the CPU registers, allowing the system to resume from the drop off point when power is restored~\cite{6341281}.
	The volatile states in main memory, including data, stacks, and heaps of tasks, can also be backed up to non-volatile memory so that the entire system can be recovered by restoring the checkpointed states after power resumption~\cite{6733152}. Various mechanisms based on the checkpointing paradigm have also been introduced to adapt peripheral I/O devices (e.g., sensors~\cite{7544395}, Wi-Fi modules~\cite{8824923}, and electrophoretic displays~\cite{EDisplay}) to intermittent power supply~\cite{8585137}.
	
	Recently, increased interest has focused on adapting system software to NVP-based devices. The compilers for NVP-based systems have been designed to reduce the size of checkpointing data (e.g., stack~\cite{7167369} and register~\cite{6865319}), thus increasing checkpointing efficiency. In addition, task schedulers have been investigated to improve quality delivered by the system in terms of respective performance indexes (e.g., the deadline miss rate~\cite{7167310} or system value~\cite{7809860}) under different application scenarios. To optimize performance in a best-effort fashion with unpredictable power supply, the \emph{forward progress} of intermittent task execution was deemed a sensible index and maximized using redesigned resource allocation policies (e.g., the scheduler~\cite{Pan3081038} or power manager~\cite{Ma3077575}).
	Under weak power supply, program sections may be longer than power-on periods. Thus, to ensure forward progress while avoiding repeated code execution, \emph{program atomicity} was supported in~\cite{ISLPED2018Kang} by ensuring that an uninterruptible code section can be run through at one execution, and \emph{progress stagnation} has been addressed in~\cite{RTAS2019} by dynamically adapting the checkpoint interval and size to the harvested energy.
	
	Without careful consideration of different system snapshots in the memory hierarchy, the checkpointing paradigm may suffer from inconsistency between the data in non-volatile memory and the restored task progress~\cite{Ransford:2014:NMB:2618128.2618136}.
	How to achieve \emph{data consistency} is a critical issue because correctness is one of the basic requirements of computer systems. Some solutions have been proposed to eliminate consistency errors. In particular, consistency-aware checkpointing approaches have been proposed to checkpoint the system at safe lines of program code~\cite{ConsistencyAware} or to insert auxiliary code to ensure the correctness of all checkpoints~\cite{Xie:2018:ADI:3184476.3182170}. A hardware scheme has been proposed to automatically checkpoint system states while discarding all speculative modifications which may lead to inconsistency~\cite{7547183}. Moreover, programming models have been proposed to prevent errors by performing \emph{data versioning} for non-volatile data~\cite{Lucia2737978} or using a task-based execution model which only allows executing one task at a time in the system~\cite{Maeng:2017:AIE:3152284.3133920}.
	However, these solutions are either based on the checkpointing paradigm which requires system suspension to backup volatile data frequently, resulting in non-negligible runtime overheads, or require changing the existing programming model, imposing a burden on application developers.

	This paper proposes a failure-resilient design which overcomes the drawbacks of checkpointing-based approaches while preserving progress across power cycles. Our design, which is compatible with multitasking operating systems, enables intermittent systems to (1) run multiple tasks \emph{concurrently} to improve computation progress, (2) achieve data \emph{consistency} without system suspension during runtime, (3) recover \emph{instantly} from power failures, and (4) \emph{accumulatively} preserve computation progress across power cycles to avoid stagnation. To realize the design, we add a \textit{data manager} and a \textit{recovery handler} in an operating system, so that the system runtime can cope with intermittence and exempts application developers from this responsibility. The idea behind the design is to leverage the characteristics of data accessed in hybrid memory, where VM provides high-performance data access while NVM provides data persistency when power failures occur.

	
	
	However, endowing intermittent systems with the four abilities raises corresponding challenges. First, \emph{serializability} of concurrent task execution must be guaranteed. In our design, the data manager allows two-version copies for each data object in VM and NVM to increase the concurrency, while ensuring that data objects modified by tasks in VM are written into NVM \emph{atomically} and will not violate the serializability of task execution. Second, \emph{data consistency} must be maintained. To this end, the recovery handler tracks the progress of all tasks, while the data manager ensures that a nonvolatile data version in NVM is consistent with the progress of \emph{finished} tasks at all times. This also guarantees that a persistently consistent version is always available in NVM. Third, because the data is instantly recoverable, after power resumption, the recovery handler only needs to recreate and rerun \emph{unfinished} tasks in VM, thereby achieving instant system recovery. Finally, to prevent tasks whose execution times are longer than power-on periods from being repeatedly recreated and rerun, the data manager will allocate their contexts in NVM so that the recovery handler can accumulatively complete them across multiple power cycles.

	
	To evaluate the efficacy of our design, we integrated it into a real-time operating system called FreeRTOS, and conducted extensive experiments with a collection of real tasks on an ultra-lightweight platform, namely the Texas Instruments MSP-EXP430FR5994 LaunchPad. Compared to checkpointing-based approaches that require runtime checkpointing and/or system logging~\cite{6733152,databaselogging}, the proposed design can improve the forward progress by between 8\% and 49\% while maintaining data consistency under a strong power source, and by infinity when checkpointing-based approaches cannot make progress under a weak power source. Experimental results with various power traces also show that our design can reduce recovery time by at least 90\%, thus making it particularly suitable for self-powered devices which may suffer from frequent power failures.


	The remainder of the paper is organized as follows. Section~\ref{sec:Backgroundand} provides background information and explains some drawbacks of existing approaches based on checkpointing. In Section~\ref{sec:design}, we present the details of our failure-resilient design. Experimental results are reported in Section~\ref{sec:Evaluation}. Section~\ref{sec:conclusion} presents some concluding remarks.

	\section{Background and Motivation} \label{sec:Backgroundand}

	\subsection{Ultra-lightweight Intermittent Devices}
	\begin{figure}[h]	
		\centering
		\includegraphics[width=0.9\columnwidth]{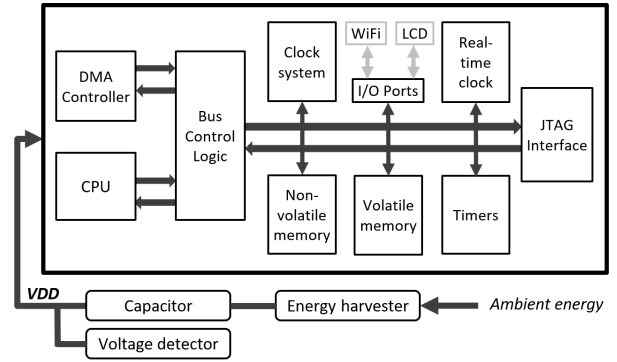}
		\caption{System architecture of a self-powered device.}\label{fig:System}
	\end{figure}
	
	\subsubsection{Hardware Architecture}
	
	Figure \ref{fig:System} shows the system architecture of a typical ultra-lightweight device equipped with various hardware components. To provide basic computing functionality, such devices must contain essential hardware components like a CPU and main memory. The CPU executes program code in memory and performs general logic and arithmetic operations on data in the processor's registers and memory. Recently, such devices have increasingly used hybrid memory architectures to take advantage of the characteristics of hybrid memory. Specifically, volatile memory (VM) features high performance and low energy consumption for data access and is usually used to store runtime data, like task stacks and intermediate results. By contrast, non-volatile memory (NVM) features non-volatility and high capacity and is usually used to preserve data when power failures occur.
	To provide additional functionality that may be required by various applications, an ultra-lightweight device can be equipped with extra components like a DMA controller, a timer, a system clock, and external I/O ports.
	The DMA controller allows applications to manage memory without occupying CPU time, in that external hardware components connected via the I/O ports can directly access data in main memory by the DMA controller.
	For applications requiring timely responses, the real-time clock and timer can be used to measure time and trigger interrupt functions to handle events in real time.
	
	To provide mobility without frequent recharging, energy harvesting has emerged as a promising power source for ultra-lightweight devices. However, power supplies reliant on energy harvesting are inherently unpredictable and unstable, increasing the difficulty of designing intermittently-powered devices. For example, a sudden power loss will cause unsaved volatile data and the computing progress of tasks to be lost.
	To deal with this issue, non-volatile processors (NVP) have emerged as a promising alternative to traditional processors. An intermittently-powered device equipped with an NVP typically contains a voltage detector and a capacitor. The capacitor saves (resp. uses) additional energy when the input (resp. output) voltage is higher (resp. lower) than the output (resp. input) voltage, while the voltage detector monitors the power supply voltage and can trigger specific functions when the voltage falls to a predefined threshold.
	By implementing backup/restore mechanisms triggered by the voltage detector, several checkpointing-based solutions have been proposed to allow for intermittent task execution~\cite{6733152,SuddenPower}.
	For example, an intuitive approach is to checkpoint all volatile data in VM (including registers and main memory) to NVM when the voltage falls below a threshold and then write the data back to VM when power is resumed.
	
	\subsubsection{System Software}
	
	A lightweight operating system, providing system services and exempting application developers from the responsibility of managing hardware resources, usually employs a scheduler to support \emph{multitasking} and control the execution order of tasks. Specifically, once the system boots up, the scheduler will setup the timer to generate periodic interrupts that divide CPU time into slices. Whenever the CPU handles an interrupt, the scheduler is invoked to allocate the next time slice to a task selected to occupy the CPU and access memory in the subsequent time slice. Note that, if the selected task differs from the currently running task, the scheduler will first perform \textit{context switch}, which saves the running task's context by pushing the data of the CPU registers into the running task's stack in memory and then restores the selected task's context by popping the previously saved data in the selected task's stack into the CPU's registers.
	
	
	In a multitasking operating system, which enables tasks to be executed in an interleaving manner, the system typically supports \textit{concurrency control} to allow concurrently executed tasks to access shared data objects. When tasks attempt to access the same data objects via the provided data access operations, the operating system controls the order of data access operations invoked by the tasks and manipulates the copies of data objects in memory, keeping data management being transparent to the tasks.
	However, when data objects are concurrently accessed by interleavingly executed tasks, the outcome of data objects is not deterministic and depends on the execution order of the operations invoked by the tasks. To ensure data access predictability, the operating system should ensure that each task can be deemed to be executed in isolation by guaranteeing \emph{serializability}, in that the concurrent execution of tasks must be equivalent to the case where these tasks are executed serially in \emph{some} arbitrary order. Note that any serial order of task execution is legitimate, so the resultant values of data are not deterministic. By allowing more tasks to be executed concurrently, the operating system increases the CPU utilization and thus improves the forward progress achieved by the system.
	
	

	\subsection{Drawbacks of Checkpointing}\label{ssec:example}
	
	To preserve the forward progress of task execution, typical intermittent systems have to frequently checkpoint task status and/or data at runtime. However, adopting checkpointing-based approaches in intermittently-powered devices presents some critical drawbacks.
	First, to preserve execution progress made between power failures, at runtime these approaches periodically checkpoint the (entire or partial) snapshot in VM to NVM, so that, after power resumption, the system can be recovered to the latest checkpoint by restoring the snapshot from NVM to VM. Consequently, \emph{data inconsistency} may occur if some data in NVM is modified between the latest checkpoint and a power failure~\cite{7753544}. Specifically, after power resumption, the execution progress will be rolled back to the latest checkpoint, whereas the data in NVM cannot be rolled back. This leads to data inconsistency between VM and NVM because the data in NVM may be modified again.
	

	To achieve data consistency, a straightforward approach is to adopt \textit{system-wise checkpointing}, which checkpoints an entire system snapshot, including data, heaps, and stacks of tasks~\cite{6733152}. This approach requires a lengthy suspension of all running tasks to ensure that all volatile content in VM is exclusively accessed by the checkpointing procedure, resulting in extra runtime overhead. To reduce the checkpoint size and time required by checkpointing, an alternative approach is to adopt \textit{logging-based checkpointing}, which records and dumps all write-ahead logs and modified data residing in VM to NVM~\cite{databaselogging}. In this way, the system can traverse logs to recover inconsistent data accordingly by redoing (resp. undoing) modifications made by finished (resp. unfinished) tasks. However, such logging-based checkpointing approaches suffer from long recovery time due to log traversing and progress loss of unfinished tasks whenever a power failure occurs. For intermittently-powered devices, which could suffer from extremely frequent power failures, the checkpointing paradigm may be unable to provide timely checkpointing and data recovery based on logs within a short power-on period.
	
	This observation suggests that intermittently-powered devices should be capable of not only \emph{progress accumulation} within short power-on periods but also \emph{instant recovery} immediately after power resumption. Furthermore, to improve the forward progress, \emph{task concurrency} and \emph{data consistency} should be achieved without runtime suspension and logging.

	\section{Failure-resilient Task Execution}  \label{sec:design}
	
	In this section, we present a failure-resilient design which allows instant system recovery from power failures and enables computation progress accumulation while achieving data consistency and the serializability of concurrent task execution without runtime checkpointing and logging. The rationale behind our design is to ensure that tasks are executed \emph{serially} in the logical sense and all modifications to data objects in NVM are written \emph{atomically}, while computation progress is accumulated by allocating data in VM or NVM instead of copying data from VM to NVM.
	Two components, namely a data manager and a recovery handler, are developed to realize the design. Section~\ref{ssec:overview} gives a design overview, while Sections~\ref{ssec:data} and~\ref{ssec:recovery} respectively present some design details of the data manger and the recovery handler.

	\subsection{Design Overview} \label{ssec:overview}
	
	\begin{figure*}[h]\vspace{0.1in}
		\centering
		\vs
		\includegraphics[width=1.9\columnwidth]{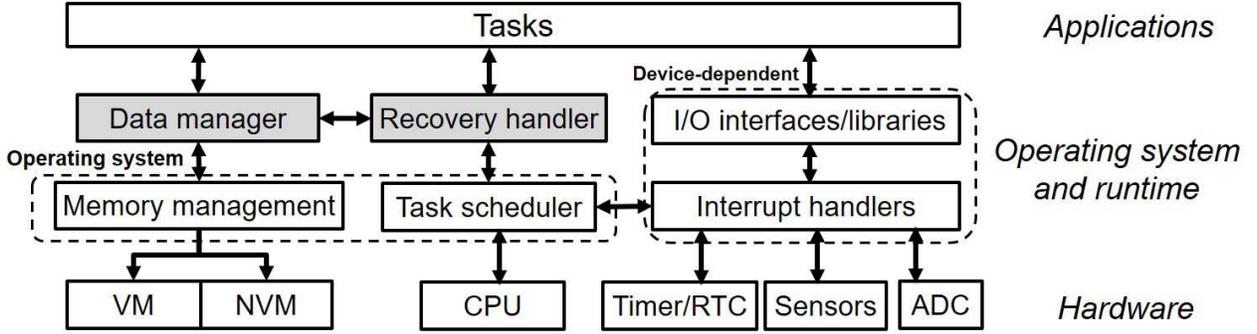}\vspace{-0in}
		\caption{Our failure-resilient design.}\label{fig:SystemOverview}
	\end{figure*}

	As shown in Figure~\ref{fig:SystemOverview}, a lightweight operating system typically provides a task scheduler and memory management to support multitasking and allow tasks to access data in memory. The scheduler provides functions to create and delete tasks and controls the execution order of tasks. Whenever a task is created, the scheduler will allocate memory space in VM by default to the task and initialize the task's status. Then, the task will enter a ready queue and wait to be scheduled. At runtime, concurrently executed tasks can \textit{read}, \textit{write}, and \textit{commit} data objects through the corresponding data access operations provided by the operating system. If data access operations made by interleavingly executed tasks are uncontrolled, the resultant values of data are unpredictable and may be undesirable. Therefore, the operating system should guarantee \emph{serializability}, in that the outcome of concurrently executed tasks is equivalent to the outcome of serially executed tasks in \emph{any} serial order. The adoption of task concurrency can improve forward progress but significantly complicates data management in intermittent systems.
	Specifically, consistency between the data and execution progress of tasks must be achieved. This is particularly difficult for lengthy tasks whose execution times are longer than power-on periods, because a lengthy task can finish only if its execution progress is accumulated across different power-on periods; otherwise, it may continuously rerun and never finish.

	Our design enables intermittently-powered systems to be capable of failure-resilient task concurrency without runtime checkpointing and logging. As shown in Figure~\ref{fig:SystemOverview}, we employ a data manager to enforce the atomicity and serializability of concurrent task execution while maximizing forward progress, as well as a recovery handler to instantly recover the system after power is resumed.
	The data manager is responsible for allocating and maintaining data and task status in VM and NVM. To ensure serializability, it replaces the original implementations of read, write, and commit operations, and allows two-version copies respectively in VM and NVM for each data object. Moreover, the data manager monitors the operations invoked by every task and validates whether serializability will be violated immediately before the task attempts to commit its modifications to data copies from VM to NVM. If the serializability is violated, the task is simply aborted and recreated. To maintain data consistency, the recovery handler is responsible for keeping track of task execution progress as tasks are created, finished, and aborted. Specifically, once a task is created by the scheduler, the recovery handler records the task's attributes in NVM so that all unfinished tasks, which are volatile in VM, can be recreated after power resumption or task abortion. After a task is finished by successfully committing its modifications to data objects from VM to NVM, the recovery handler marks the task as finished, preventing the committed data objects from being inconsistent due to repeatedly modified by finished tasks.
	
	The data manager and the recovery handler also cooperate to accumulate progress of lengthy tasks whose execution times are too long to be finished within one power-on period. Specifically, after power resumption, the recovery handler determines whether a task is lengthy based on
	whether the task has ever been rerun due to a power failure. Once a task is deemed lengthy while being recreated by the recovery handler, the data manager will allocate memory space in NVM (instead the default VM) to the task so that its execution progress will become nonvolatile at the cost of lower execution performance. To avoid data inconsistency, before a power failure occurs (detected by a voltage detector in our implementation), the recovery handler enforces the scheduler to context switch the currently executed lengthy task (if any) to prevent it from being scheduled at a low voltage. After the task is switched out, the data of the CPU registers are automatically pushed to the top of its stack and its context can be preserved in NVM during power-off periods. After power resumption, those unfinished lengthy tasks can instantly resume from where they left off by simply being added into the scheduler during system recovery. This design allows for lengthy tasks to accumulate their progress across power cycles without additional overhead of memory copying required by runtime checkpointing between VM and NVM. Note that, to ensure serializability, the data manager allocates data copies modified by lengthy tasks in NVM and performs serializability validation as usual before a lengthy task attempts to commit its modifications. If serializability is violated, the lengthy task is also aborted and recreated.

	\subsection{Consistency-aware Memory Management}\label{ssec:data}
	
	\subsubsection{Task Context Allocation}
	
	After a task is created by the scheduler, the data manager maintains the memory space allocated to the task as well as its stack. A task's stack stores local variables created by unfinished function calls invoked by the task. These variables will be declared and initialized by the system and then modified by the task at runtime. To maximize computation efficiency, when a task is created, its stack is allocated in VM by default and, during task execution, some variables in the stack will be fetched into the CPU registers. Because the stack size is usually fixed and needs to be specified prior to task creation, the operating system normally supports dynamic memory allocation as well, allowing a task to acquire additional memory space to store local variables whose sizes will be specified at runtime. The data manager allocates the additional memory space required by the task from the system \emph{heap} via system calls (e.g., \texttt{malloc()} and \texttt{free()} in a system supported standard C library). Because the stacks and heaps of tasks are allocated in VM by default, when a power failure occurs, the contexts of unfinished tasks, which have yet to commit the modified data to NVM, will be lost as if the tasks have never been executed.
	
	To preserve the contexts of lengthy tasks (as determined by the recovery handler) and modified data during power-off periods, the data manager allocates their stacks, heaps, and all used memory space in NVM instead. Moreover, the data manager uses the memory management mechanism provided by the operating system to maintain the memory space used by lengthy tasks in some data structures, which are stored in NVM so that, after power resumption, the stacks, heaps, and memory space allocated to lengthy tasks can be found and reused accordingly. However, if a power failure occurs during the execution of a lengthy task, its context will become invalid because the variables currently fetched into the CPU registers will be lost, whereas the stack and heap will still be preserved in NVM, resulting in inconsistent task contexts in the memory hierarchy. Thus, to completely preserve the contexts of lengthy tasks, we prevent a lengthy task from being scheduled at a low voltage by forcing the scheduler to context switch the currently executed task if it is lengthy, so that the variables fetched into the CPU registers will be pushed on top of the task's stack and also preserved in NVM during power-off periods.

	\subsubsection{Two-version Data Allocation}
	In addition to local variables, all tasks are allowed to access data objects which may be shared by multiple concurrently executed tasks. The data manager maintains two respective versions (i.e., \textit{working} version and \textit{consistent} version) for each data object, where the working version (allocated in VM by default unless otherwise specified) provides high performance and energy efficiency for data access, while the \emph{consistent} version (stored permanently in NVM) provides reliability and persistency when a power failure occurs. Moreover, the data manager also allows for multiple working copies for the working version of a data object, as well as allows a \textit{temporary} copy and a \textit{persistent} copy respectively in VM and NVM for the consistent version while keeping the two copies identical at all times. This can increase the flexibility of concurrent task execution by allowing multiple tasks to simultaneously access the same data object, thus improving forward progress. All data accesses are via the three operations, namely read, write, and commit, provided by the data manager. A task can read a data object by obtaining its memory address via the read operation. To improve data access efficiency, we adopt the copy-on-write strategy for the write operation. Specifically, once a task attempts to modify a data object that has yet to be modified by the task, a working copy of the data object will be created and dedicated for the task to read and write afterward. To update the persistent copy of the consistent version in NVM, a task must perform the commit operation, and the update will be made only if the serializability condition is not violated. By using the operations to access data objects, data management can take advantage of the characteristics of hybrid memory while being transparent to tasks.
	
	\begin{figure}[h]
		\subfigure[Data access and allocation policy for non-lengthy tasks]{
			\label{fig:default}
			\includegraphics[width=0.98\columnwidth]{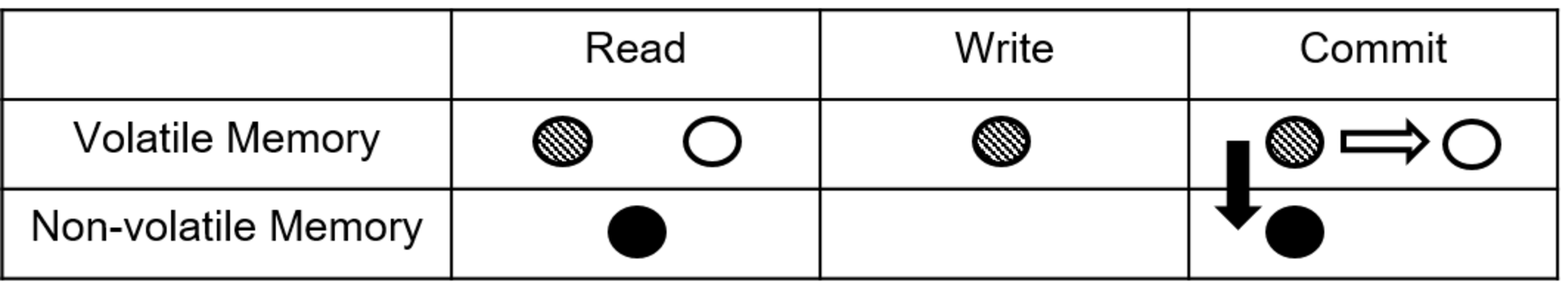}
		}
		\vspace{0.05in}
		\subfigure[Data access and allocation policy for lengthy tasks]{
			\label{fig:lengthy}
			\includegraphics[width=0.98\columnwidth]{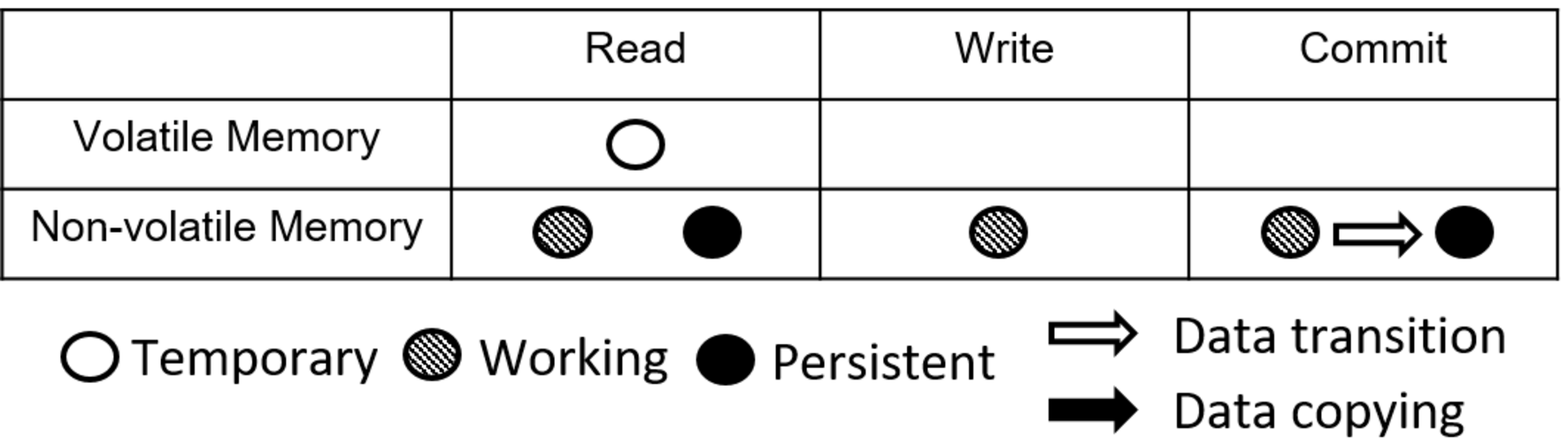}
		}
		\caption{Data copies accessed by three operations.}\label{fig:operations}
	\end{figure}

	The data manager carries out the three operations to improve forward progress while ensuring data consistency.
	Figure~\ref{fig:default} shows the data copies accessed when a \emph{non-lengthy} task invokes each of the three operations.
	Once a task invokes the read operation on a data object, the temporary copy of the data object is read by default unless the working copy dedicated for the task is available (i.e., the data object has been modified by the task). However, if the temporary copy in VM is not identical to the persistent copy in NVM (e.g., the system resumes after a power failure), the temporary copy is deemed to be \emph{invalid} and the persistent copy is read instead.
	Considering the access efficiency of writing a data object, the data manager allocates the working copy dedicated for each task in VM by default.
	A task calls the commit operation immediately before finishing its execution to update the consistent versions of those data objects modified by the task. Before updating the consistent versions, the data manager validates whether the update violates the serializability of those finished tasks. If serializability is violated, the task is aborted and rerun. Otherwise, for each data object modified by the finished task, its persistent copy in NVM is updated as the task's working copy, and the working copy becomes its temporary copy in VM. Note that a data object may have multiple working copies if it is accessed concurrently by several tasks, and the working copy left by the recently finished task in VM always transits into the temporary copy.

	The data copies accessed by a \emph{lengthy} task are slightly different from the copies accessed by a non-lengthy task, because the former is \emph{intermittently} executed in NVM while the latter is \emph{atomically} executed in VM. Figure~\ref{fig:lengthy} shows the data copies accessed when a lengthy task invokes each of the three operations. Once the lengthy task attempts to read a data object, the data copy to be read is also determined according to the default rule applied to non-lengthy tasks. The main difference is that the working copies of all data objects modified by the lengthy task will be allocated in NVM (instead of VM) to ensure that the execution progress and data modifications of lengthy tasks are consistent in NVM across power-on periods. When the lengthy task attempts to commit its modifications, the serializability is also validated to determine whether the update to consistent versions is permitted or the task should be aborted and rerun. However, if the commit operation is permitted, the working copy of each modified data object directly transits into its persistent copy in NVM (without additional memory copying from VM to NVM), but the temporary copy remains unchanged and becomes invalid in VM because it may not be identical to the persistent copy.

	To prevent data corruption due to power failure during the commit operation, the operation must be implemented to be atomically. In other words, to prevent partial updating of the consistent version in NVM, the commit operation must atomically update none or all of the modifications made by the task. To this end, we borrow an idea proposed to atomically update shadow pages from~\cite{Gray:1992:TPC:573304} and use a bit map stored in NVM to maintain the addresses of valid persistent copies. Specifically, for each data object committed by a non-lengthy task, its modification on a data object will first be made on a \textit{shadow} copy in NVM, while for each data object committed by a lengthy task, its working copy will first transit into a shadow copy. Then, those modified shadow copies and their persistent copies will be swapped by updating the bit map only after all modifications or transitions for the committed data objects are finished. Because updating the bit map only requires one CPU instruction, which is the minimum execution unit of a CPU, the commit operation is atomic and resilient against power failures. More implementation details will be discussed in Section~\ref{sssec:atomic}.

	\subsubsection{Serializability Validation}
	
	The data manager ensures serializability with a backward validation procedure which determines whether those finished tasks remain serializable if a new commit operation is performed by a task. To this end, for each finished task (whether it is lengthy or non-lengthy), the validation procedure maintains a \textit{validity time interval}\footnote{In our implementation, the unit of validity time intervals is set as one single system time tick, and the time tick is triggered every 2ms on the used Texas Instruments platform.}, in which the task can be viewed as having been executed in isolation. Moreover, each data object is also associated with a validity time interval which is updated as the validity time interval of the most recently finished task that commits the object. The validation procedure is invoked whenever a task attempts to commit its modifications to data objects. If a valid validity time interval can be derived for the task, its commit operation proceeds; otherwise, it is aborted and rerun.
	
	\begin{algorithm}[h]
		\caption{Validation Procedure}
		\label{algo:valid}
		\begin{algorithmic}[1]
			\small
			\item[\textbf{Input:}]  ${T}$, ${R = \{r_{1}}...r_{n}\}$, ${W = \{w_{1}}...w_{m}\}$,
			\State ${T.begin = 0}$;
			\State ${T.end = getcurrenttime()}$;
			\For{\texttt{i = 1 : n}}
			\State ${T.begin = max(T.begin, r_{i}.begin+1)}$;
			\If{${r_{i}.obj}$ was first modified by any finished task $\tau$ after ${r_{i}}$}
			\State ${T.end = min(T.end, \tau.begin-1)}$;
			\EndIf
			\EndFor
			\For{\texttt{i = 1 : m}}
			\State ${T.begin = max(T.begin, w_{i}.begin+1)}$;
			\If{${w_{i}.obj}$ was last modified by any finished task $\tau$ after ${w_{i}}$}
			\State ${T.begin = max(T.begin, \tau.begin+1)}$;
			\EndIf
			\EndFor
			\If{${T.begin \leq T.end}$}
			\State commit($T$);
			\Else
			\State abort($T$);
			\EndIf
		\end{algorithmic}
	\end{algorithm}

	Algorithm~\ref{algo:valid} implements the validation procedure which determines the validity time interval (${T.begin}$ to ${T.end}$) for a given task ${T}$. At runtime, the data manager records all read actions, ${R = \{r_{1}}...r_{n}\}$, made by task $T$ on the temporary or persistent copies, as well as all write actions, ${W = \{w_{1}}...w_{m}\}$, made by task $T$ on its working copies. Each read action ${r_{i}}$ records the validity time interval (${r_{i}.begin}$ to ${r_{i}.end}$) of the data object, ${r_{i}.obj}$, read by task $T$ via the $i$th read operation. Similarly, each write action ${w_{i}}$ records the validity time interval (${w_{i}.begin}$ to ${w_{i}.end}$) of the data object, ${w_{i}.obj}$, written by task $T$ via the $i$th write operation. The algorithm first initializes the validity time interval of task $T$ as the range from the outset to the current time of the system (Lines 1-2). Then, the interval shrinks according to the task's read and write actions.
	For each read action ${r_{i}}$, the beginning of the time interval of task $T$ will be pushed forward because ${T}$ can only read the data object $r_{i}.obj$ after the object is committed by another finished task at $r_{i}.begin$ (Line 4). Moreover, if the data object $r_{i}.obj$ is first committed by any finished task ${\tau}$ again after ${r_{i}}$, the end of the interval of task $T$ will be pushed backward so that task $T$ can be viewed as finished before task ${\tau}$ starts (Lines 5-6). Similarly, for each write action ${w_{i}}$, the beginning of the interval of task $T$ will be pushed forward because $T$ can only commit the data object $w_{i}.obj$ after the object is committed at $w_{i}.begin$ (Line 8). Moreover, if the data object $w_{i}.obj$ is last committed by any finished task ${\tau}$ again after ${w_{i}}$, the beginning of the interval of task $T$ will be further pushed forward so that task $T$ can be viewed as started after task ${\tau}$ commits (Lines 9-10).
	Finally, the algorithm checks whether the time interval of task $T$ is valid (Line 11). If the interval is not empty, the commit operation is performed (Line 12); otherwise, task ${T}$ will be aborted (Line 14).
	
	
	\subsubsection{Property Analysis}
	
	We now analyze the time complexity of Algorithm~\ref{algo:valid} and prove that it maintains the serializability of those finished tasks. To prove serializability, we first construct a precedence graph based on the data access operations made by finished tasks. In the precedence graph, each node represents a finished task, and an arc between two nodes indicates the precedence order between two tasks due to their data access patterns conducted on some shared data objects. Then, we show that the precedence graph is acyclic. An acyclic graph indicates that all finished tasks are \textit{conflict-serializable}~\cite{Gray:1992:TPC:573304}, in that the data access operations conducted by all finished tasks can be viewed as if these operations are conducted by the tasks executed in a serial order according to the graph.

	\begin{lemma}
		The time complexity of Algorithm~\ref{algo:valid} for validating a task ${T}$ is ${O(N + M)}$, where $N$ and $M$ respectively represent the number of data objects accessed by $T$ and the number of tasks concurrently executed with $T$.
	\end{lemma}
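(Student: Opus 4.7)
The plan is a direct accounting of Algorithm~\ref{algo:valid}'s runtime, line by line, using the bookkeeping established by the data manager in Section~\ref{ssec:data}.

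First, I would dismiss Lines 1--2 and Lines 11--14 as $O(1)$, since they consist only of scalar assignments, comparisons, and a single call to commit or abort, so the cost is dominated by the two for-loops. Next, I would bound each loop's iteration count: the read-loop runs $n$ times and the write-loop runs $m$ times, and after collapsing repeated accesses to the same object into a single tightest constraint on $T.begin$ and $T.end$, both $n$ and $m$ are $O(N)$.

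The crux is then showing that each loop iteration runs in $O(1)$. The non-trivial work inside the body is the conditional ``was $r_i.obj$ first modified by any finished task $\tau$ after $r_i$?'' on Line~5 and its write-side analogue on Line~9. I would argue this is an $O(1)$ lookup under the data structures from Section~\ref{ssec:data}: each data object carries a reference to its currently committed version together with that version's validity interval (and, for the write side, its last-modifier record), so the presence of an interfering committer can be detected by one pointer dereference and a timestamp comparison. The $M$ term in the bound is then attributable to a one-time setup step performed before entering the loops: to let the above conditionals refer to ``any finished task,'' the validation procedure must first walk the list of finished tasks concurrent with $T$, which is $O(M)$ independent of $N$. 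Summing the $O(N)$ loop bodies with the $O(M)$ setup yields $O(N+M)$.

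The main obstacle is the $O(1)$ per-iteration claim for the conditional tests. Without the right indices---namely, each object pointing to its committed version together with an ordered record of its committers---the naive implementation would rescan the concurrent tasks per iteration and blow the bound up to $O(NM)$. I would therefore tie the argument explicitly to the two-version structure of Section~\ref{ssec:data} rather than treat the conditional as a black-box query, and I would make the amortisation of the $O(M)$ setup across all per-object queries fully explicit so that the $N$ and $M$ contributions do not compound multiplicatively.
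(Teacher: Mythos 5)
Your proposal is correct and follows essentially the same route as the paper: a direct operation count showing the read/write loops contribute $O(N)$ and the checks against concurrently executed (finished) tasks contribute $O(M)$, with no multiplicative $N\cdot M$ blow-up. The only difference is one of explicitness — the paper simply asserts that the read actions require checking at most $N$ object intervals and at most $M$ concurrent tasks in total, whereas you justify this by pinning the $M$ term to a one-time walk of the concurrent finished tasks and arguing each per-iteration conditional is an $O(1)$ lookup under the two-version bookkeeping, which is a reasonable (and slightly more careful) reading of the same argument.
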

	
	\begin{proof}
		To validate whether the serializability is maintained after task ${T}$ is committed, Algorithm~\ref{algo:valid} examines the read and write actions made by the task. Because task ${T}$ can read at most $N$ data objects which may be concurrently modified by at most $M$ tasks, for all read actions, the algorithm needs to check the validity time intervals of at most ${N}$ data objects and at most ${M}$ concurrently executed tasks which may modify the same data objects. Moreover, for all write actions, the algorithm needs to check the validity time intervals of at most ${N}$ data objects written by task ${T}$. Therefore, the time complexity of the algorithm is ${O(N + M)}$.
	\end{proof}
	
	\begin{theorem}
		All the finished tasks validated by Algorithm~\ref{algo:valid} are \textit{conflict-serializable}.
	\end{theorem}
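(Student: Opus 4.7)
The plan is to construct the precedence graph $G$ of finished tasks---whose nodes are finished tasks and whose directed edge $T \to T'$ marks any data-object conflict in which at least one of the two operations is a write and $T$'s operation precedes $T'$'s---and to prove that $G$ is acyclic; by the standard characterization of conflict-serializability, this suffices.

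For each edge $T \to T'$ in $G$, I would read off an inequality on the validity intervals $[T.begin, T.end]$ and $[T'.begin, T'.end]$ directly from the lines of Algorithm~\ref{algo:valid}. A \emph{write-read} edge (where $T$ commits $x$ and $T'$ later reads $x[T]$) triggers Line~4 at $T'$'s validation with $r_i.begin = T.begin$, yielding $T'.begin \geq T.begin + 1$. A \emph{write-write} edge triggers Line~8 or Line~10, depending on whether $T$ commits $x$ before $T'$ creates its working copy or between $T'$'s write and validation; either way $T'.begin \geq T.begin + 1$. A \emph{read-write} edge in which the writer $T'$ validates before the reader $T$ (the ``writer-first'' sub-case) triggers Line~6 at $T$'s validation, giving $T.end \leq T'.begin - 1$. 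The remaining ``reader-first'' read-write sub-case yields no direct interval inequality, but it does force $T'$ to validate strictly after $T$, since commit/validation is executed atomically one task at a time.

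To prove acyclicity I would argue by contradiction: given a hypothetical cycle $T_{i_1} \to T_{i_2} \to \cdots \to T_{i_k} \to T_{i_1}$, pick the cycle member $T_*$ with the latest validation time, which is well defined since commits are totally ordered. The edge leaving $T_*$ in the cycle points to an earlier-validated task $T_{\text{next}}$; since write-read, write-write, and reader-first read-write edges all require the source to validate strictly before the target, the edge out of $T_*$ must be a writer-first read-write edge, supplying $T_*.end < T_{\text{next}}.begin$. Composing the $k-1$ begin-inequalities along the return path from $T_{\text{next}}$ back to $T_*$ yields $T_{\text{next}}.begin \leq T_*.begin$, which together with $T_*.begin \leq T_*.end < T_{\text{next}}.begin$ is a contradiction. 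Hence $G$ is acyclic, any topological order of its nodes gives an equivalent serial schedule, and the finished tasks are conflict-serializable.

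The main obstacle will be the reader-first read-write edges that may appear along the return path from $T_{\text{next}}$ back to $T_*$, since Algorithm~\ref{algo:valid} records no explicit interval inequality for such edges. I plan to bridge these by tracking validation times in parallel with begin-values: each reader-first read-write edge still forces a strict forward step in the atomic validation order, so the $k-1$ return edges together advance the validation time monotonically, pinning $T_*$ as the unique latest member and preventing an alternative choice that would sidestep the single writer-first edge whose strict interval inequality drives the contradiction.
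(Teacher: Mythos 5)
Your classification of the conflict edges and the inequalities you extract from Lines 4, 6, 8 and 10 are accurate, and you have correctly located the crux: a \emph{reader-first} read-write edge (the reader validates before the writer commits) is the one conflict for which Algorithm~\ref{algo:valid} records no relation between the two validity intervals. The problem is that your proposed bridge does not close this gap. Knowing that such an edge moves strictly forward in the (atomic) validation order gives you information about \emph{commit times}, not about \emph{begin values}, and it is begin values you must chain along the return path to reach $T_{\mathrm{next}}.begin \leq T_*.begin$. Indeed the needed inequality can simply be false across a reader-first edge: let $A$ read an old copy of $x$, also read an object committed recently by some task with a large begin (so $A.begin$ is large, say $41$), write and commit $y$ at time $60$; let $B$ read the old copy of $y$ at time $55$, write $x$, and validate at time $70$. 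Lines 5--6 give $B.end \leq A.begin-1 = 40$ while $B.begin$ is only pushed up to $1$, so both tasks pass validation with intervals $[41,60]$ and $[1,40]$; yet the reader-first edge $A \to B$ has $A.begin > B.begin$, so no composition of begin-inequalities can traverse it. (This two-task configuration, with edges $A \to B$ on $x$ and $B \to A$ on $y$, is exactly the shape your contradiction argument would have to rule out, so the obstacle is structural, not presentational: an argument of your form cannot succeed without some additional constraint on anti-dependencies that the algorithm as stated does not impose.)

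For comparison, the paper's proof proceeds by induction on the number of finished tasks and, assuming a cycle through the newly committed task $T$, asserts that every arc of the precedence graph orders the validity intervals of its endpoints and that the finishing order of $\tau_1,\dots,\tau_c$ agrees with the order of their intervals; it does not distinguish edge types at all, and therefore never confronts the reader-first case that you isolated (nor the fact that finishing order and interval order need not coincide). So your finer-grained route is genuinely different from, and more honest than, the paper's argument, but as written it stops exactly where the extra idea is required; to complete a proof along your lines you would need either to show such reader-first/writer-first cycles cannot both validate (which the construction above puts in doubt) or to strengthen the validation rule (e.g., constrain a committing writer against earlier-committed readers of the overwritten version) and prove the theorem for the strengthened algorithm.
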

	
	\begin{proof}
		This theorem can be proved by showing that the precedence graph is acyclic. For ease of presentation, let ${G_{i}}$ represent the precedence graph constructed from the first ${i}$ finished tasks. We prove the theorem by mathematical induction on index ${i}$ when ${i \geq 1}$. As the induction basis, when $i=1$, the theorem is correct because only one task is committed and no cycle can be formed in the precedence graph ${G_{1}}$. For the induction hypothesis, suppose that the formula is correct for the first $k$ finished tasks, and the precedence graph ${G_{k}}$ is acyclic. We show that the formula is also correct for the first $k+1$ finished tasks.
		
		We prove that the precedence graph ${G_{k+1}}$ is acyclic by contradiction. Suppose that ${G_{k+1}}$ consists of a cycle formed by a task set $\{\tau_{1}, \tau_{2}, ..., \tau_{c}\}$ in ${G_{k}}$ and the latest committed task ${T}$. Without loss of generality, we assume that these tasks are finished in the order of $\tau_1, \tau_2, ..., \tau_c$, and $T$. A cycle formed immediately after task $T$ is committed indicates that the graph contains an arc from $T$ to ${\tau_{1}}$ and an arc from $\tau_{c}$ to $T$. Based on the precedence relationship determined by the algorithm, the arc from $T$ to ${t_{1}}$ suggests that the validity time interval of $T$ is earlier than that of ${t_{1}}$. Similarly, the arc from ${t_{c}}$ to ${T}$ suggests that the validity time interval of $T$ is later than that of ${t_{c}}$. In other words, the validity time interval of ${t_{c}}$ is earlier than that of ${t_{1}}$. However, we assume that $\tau_1, \tau_2, ...$, and $\tau_c$ are finished in order, so the validity time interval of ${t_{1}}$ should be earlier than that of ${t_{c}}$.
		This results in a contradiction and implies that ${G_{k+1}}$ will not consist of a cycle if $T$ is permitted to be committed. Therefore, we can conclude that all the finished tasks validated by Algorithm~\ref{algo:valid} are \textit{conflict-serializable}.
	\end{proof}

	\subsection{Instant System Recovery}\label{ssec:recovery}
	
	\subsubsection{Data Recovery} To enable the system to recover to a consistent state from power or task failures, the recovery handler maintains the consistency between data objects and task execution. Recall that, by atomically updating the persistent copies in NVM, the data manager prevents data objects from being partially modified even if a power failure occurs during the update. Therefore, although all temporary copies in VM are lost after a power failure, once power is restored, a persistent copy for each data object can be accessed immediately from NVM and its temporary and working copies can be recreated in VM when necessary according to the persistent copy. As a result, the recovery handler achieves instant data recovery.
	
	\subsubsection{Task Recovery}
	
	To recover tasks when power is resumed or validation fails, the recovery handler records information about whether a task is finished and its attributes (e.g., code address, name, stack size, and priority) needed to recreate the task. Note that the information is stored in a data structure in NVM. Based on the information, the recovery handler monitors unfinished tasks, recreates aborted tasks, and maintains consistency between task execution and data objects. Specifically, if the validation result for a task is not serializable, the recovery handler is notified to rerun the task by aborting and recreating the task. Similarly, if a power failure occurs, although the execution progress (e.g., data, stacks, and heaps) of non-lengthy tasks in VM and CPU registers are lost, the recovery handler simply identifies those unfinished tasks according to the data structure and recreates non-lengthy tasks to achieve instant task recovery.
	
	To prevent tasks whose execution times are too lengthy to be finished within one power-on period from being repeatedly recreated and rerun, the recovery handler detects lengthy tasks and allows the computation progress of these tasks to be preserved in NVM across multiple power-on periods. When the power is resumed, if a task has ever been recreated and still cannot be finished within the latest power-on period, the task will be created in NVM and deemed a lengthy task thereafter. Note that the classification of a task as lengthy is related to the power condition at the moment. At runtime, to successfully preserve the context of a lengthy task, including its variables fetched in the CPU registers, if the currently executed task is lengthy, the recovery handler forces the task to be switched out at a low voltage (before a potential power failure occurs). Whenever the currently executed task is switched out, the recovery handler removes all lengthy tasks from the ready queue and prevents them from being scheduled and executed until the next power-on period. Recall that once a lengthy task is detected and created, the data manager will allocate its context in NVM so that its computation progress can be preserved across power cycles. Therefore, after each power resumption, the recovery handler only needs to repeatedly add (instead of recreating) the task into the ready queue of the scheduler until the task is finished. More implementation details about low voltage detection and context switch enforcement will be respectively discussed in Sections~\ref{sssec:voltagedetect} and~\ref{sssec:contexxtswitch}.
	
	


	\subsection{Implementation Issues}\label{ssec:implemetation}
	
	Our design was integrated into FreeRTOS~\cite{FreeRTOS}, a real-time operating system supporting many kinds of commercial microcontrollers, running on an MSP-EXP430FR5994 LaunchPad~\cite{EXP430FR5994}, a Texas Instruments platform featuring 256KB FRAM (Ferroelectric Random Access Memory) and 8KB on-chip SRAM (Static Random-Access Memory). For portability across different system architectures and platforms, we integrated the proposed design into FreeRTOS while minimizing kernel code modifications. Our implementation comprises 11 files and 1460 lines of C code, among which 72 lines are scattered in 3 files belonging to the kernel\footnote{The intermittent OS is released under an open-source license and available at https://github.com/meenchen/Intermittent-OS.}. We discuss some technical issues that arise when implementing our design into FreeRTOS.
	
	\subsubsection{Operating System Integration}
	
	The data manager and the recovery handler are respectively implemented on top of the memory management and the task scheduler. FreeRTOS provides a set of APIs for program developers to create, schedule, suspend, and delete tasks, where the created tasks are executed by the scheduler using a round-robin scheduling policy. The scheduler records the status of each task by maintaining its \textit{task control block}, which keeps the task's information (e.g., code address, stack address, priority, etc.) in VM by default. Through the APIs, the recovery handler recreates tasks aborted due to validation or power failures. Moreover, if a task is deemed lengthy during system recovery, the scheduler stores the task's control blocks in NVM, keeping the statuses of lengthy tasks across power failures. To this end, we extend the kernel to record task attributes once a task is created, to allocate the context of lengthy tasks in NVM, and to count the number of context switches as the current timestamp (to avoid additional overhead caused by frequently accessing the real-time clock or timer).
	
	On the other hand, FreeRTOS supports a memory management mechanism and provides interfaces to allocate and deallocate memory space of the heap in VM by default. Through these interfaces, the data manager can manage the physical addresses of working and temporary copies for each data object and reclaim their space when they become invalid. In addition, we reserve an amount of memory space in NVM and adopt the memory management mechanism to maintain the persistent copies of data objects, the contexts of lengthy tasks, and the task attributes (i.e., code address, priority, stack size, name, and execution status) for those unfinished tasks recorded by the recovery handler. Note that, to ensure serializability of concurrent task execution, our read, write, and commit operations replace the original implementations provided by typical operating systems with concurrency control.

	\subsubsection{Early-abortion Validation} Whenever a task finishes, immediately before the modified data objects are committed, the data manager invokes the proposed algorithm to validate the serializability of read and write actions performed by the task, and if the serializability is violated, the task is aborted and rerun. To mitigate the waste of computation power on non-serializable tasks, we detect whether a task becomes non-serializable during its execution by simultaneously executing parts of Algorithm~\ref{algo:valid}. This is achieved by adding a data structure to record the validity time intervals for all tasks and data objects. According to the algorithm, the time interval of a task shrinks immediately when the task reads a data object (Lines 3-4) or when a data object read by the task is first committed by another task (Lines 5-6). Therefore, whenever the task performs a read operation, we can check whether the time interval of the task remains valid and early-abort the task once the interval becomes empty. Note that the first half of the validation procedure (i.e., Lines 3-6 of the algorithm), which has already been executed during task execution, can thus be skipped when the task attempts to commit its modifications.

	\subsubsection{Atomic Commit Operations}\label{sssec:atomic} When a task attempts to commit its modifications to persistent copies, the commit operation must atomically update none or all of the modifications to the persistent copies to prevent the consistent version in NVM from being partially updated and thus becoming inconsistent. To this end, we borrow the idea used to atomically update shadow pages from database systems~\cite{Gray:1992:TPC:573304} and employ a data structure stored in NVM to maintain the addresses of persistent copies so that all updates to the consistent version can be finalized by one single CPU instruction. A similar idea was also adopted in~\cite{ICCAD2019} to realize atomic commit operations. As shown in Figure~\ref{fig:mapstructure}, the data structure contains two address maps and a bit map. Each entry in an address map stores the physical address of a persistent copy, and each bit in the bit map is associated with a data object to indicate which address map has the valid address of its persistent copy (e.g., in the figure, the address map 0 has the valid addresses of the persistent copies for data objects 0 and 1).
	
	\begin{figure}[h]
		\centering
		\includegraphics[width=0.98\columnwidth]{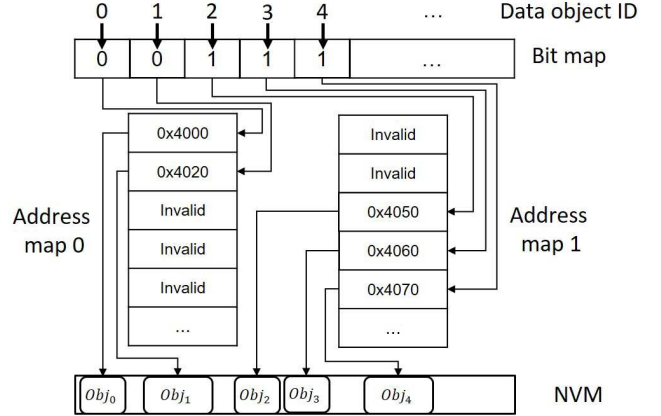}
		\caption{Data structure for addressing persistent data copies.}\label{fig:mapstructure}
		\vspace{-1em}
	\end{figure}
	
	Whenever a task finishes, for each data object modified by the task, the commit operation updates the invalid address in one of the two maps to the new address (e.g., if data object 0 is modified, the first entry in address map 1 is updated to its new address).
	Then, after the addresses of all modified data objects are updated, the commit operation simultaneously toggles the corresponding bits in the bit map (e.g.,  for data object 0, the first bit in the bit map is changed from 0 to 1).
	Because updating the bit map only requires one CPU instruction, which is the minimum execution unit of a CPU, the commit operation is atomic and resilient against power failures. However, the maximum number of bits updated by a CPU instruction is limited by the bit width of the CPU. For example, the Texas Instruments platform provides a 16-bit CPU, so at most 16 data objects can be simultaneously committed with the data structure. This should be sufficient for most applications on lightweight embedded devices. If the number of data objects modified exceeds the CPU bit width, a hierarchical bit map can be used to extend the data structure to commit a larger number of data objects.

	\subsubsection{Low Voltage Detection}\label{sssec:voltagedetect} To ensure the contexts of lengthy tasks can be successfully switched out before power failures, we use the platform’s Analog-to-digital converter (ADC) to generate interrupts when the voltage of the capacitor is lower than a threshold. After power resumption, if there are lengthy tasks running in the system, we initialize and activate the ADC to detect whether the current voltage is below the given threshold. The amount of energy stored in the capacitor can be calculated by ${\frac{1}{2}CV^2}$ (in joules), where the ${C}$ and ${V}$ respectively represent the capacitance and the current voltage of the capacitor. Based on the platform's specification, including the operating voltage ($V_{op}$), the maximum power consumption ($P$), and the context switch period of the operating system ($T_{cs}$), the threshold ($V_{th}$) can be appropriately predetermined. Once a low-voltage interrupt is triggered, to ensure that the remaining energy is sufficient to successfully switch out a lengthy task, the energy stored in the capacitor $\frac{1}{2}C(V_{th}^2 - V_{op}^2)$ must be greater than or equal to the energy required for one context switch period $P \times T_{cs}$, so the threshold can be set as $V_{th} \geq \sqrt{\frac{2PT_{cs}}{C} + {V_{op}}^2}$.

	\subsubsection{Context Switch Enforcement}\label{sssec:contexxtswitch}
	When a low voltage interrupt is triggered, the interrupt service routine notifies the recovery handler to set a low voltage flag. If the flag is set, after the currently executed task is switched out by the scheduler, the recovery handler suspends all lengthy tasks by invoking an API, namely \texttt{vTaskSuspend()}, provided by the scheduler in FreeRTOS to remove every lengthy task from the ready queue. Therefore, after the currently executed task, which could be lengthy, is successfully switched out, only non-lengthy tasks are eligible to be scheduled and the contexts of lengthy tasks will be preserved in NVM in the current power-on period. After power resumption, the recovery handler resumes all lengthy tasks by invoking another API, namely \texttt{vTaskResume()}, provided by the scheduler to put every lengthy task back into the ready queue.
	
	\subsubsection{Compatibility with Hardware-assisted Checkpointing} Our design is also compatible with a hardware-assisted checkpointing mechanism, e.g., NVP-based devices that automatically checkpoint all volatile data to NVM when a low voltage is detected~\cite{6341281}. However, a checkpointing failure would lead the system status to be rolled back to the latest successful checkpoint~\cite{Ransford2011}. As a consequence, some finished tasks could potentially be rolled back and update their modifications to the consistent version in NVM again, thus giving a rise to data inconsistency. To address this issue, if the system status is rolled back to the latest checkpoint after power consumption, the recovery handler can simply delete those finished tasks that have successfully committed their modifications based on the data structure it maintains in NVM.

	\section{Performance Evaluation} \label{sec:Evaluation}
	
	\subsection{Experimental Setup}
	\begin{table}[h]
		\centering
		\small
		\begin{tabular}{|l|p{4cm}|}
			\hline
			\multicolumn{2}{|c|}{Hardware} \\
			\hline
			MCU & 16-bit RISC EXP430FR5994 \\
			Memory & 8 KB SRAM \& 256 KB FRAM\\
			\hline
			\multicolumn{2}{|c|}{Software} \\
			\hline
			OS & FreeRTOS V9.0.0\\
			\hline
			\multicolumn{2}{|c|}{Energy harvesting management \& Power supply} \\
			\hline
			Capacitance & 200${\mu}F$\\
			Switch on/off voltage & 2.8 V/2.4V \\
			Strong power source & 3mW = 3V $\times$ 1mA\\
			Weak power source & 1.5mW = 1.5V $\times$ 1mA\\
			\hline
		\end{tabular}
		\vspace{0.05in}
		\caption{Specifications of the experimental platform.}\label{table:msp}
	\end{table}
	
	We conducted a series of experiments on the Texas Instruments platform with an energy harvesting management (EHM) module. Table~\ref{table:msp} details the specifications of the related hardware and software. The platform is powered by the EHM unit which consists of a BQ25504 low-power boost converter, a 200 $\mu F$ capacitor to store the harvested energy, and a switch to turn on (resp. off) the power supply of the platform when the voltage of the capacitor raises above 2.8V (resp. drops below 2.4V). We used a programmable power supply made by B\&K Precision to emulate the power source for the EHM. To simulate different energy harvesting sources while making the experiments reproducible, we manufactured strong (3mW = 3V$\times$1mA) and weak (1.5mW = 1.5V$\times$1mA) power traces, each of which lasted 100 seconds and was sufficient to mitigate experimental variances while reproducing the results. Neither power source was sufficient for the platform to operate continuously, repeatedly resulting in power failures and resumptions depending on the amount of energy harvested and consumed during system operation. The experimental environment is shown in Figure~\ref{fig:ExperimentEnviroment}.
	
	\setcounter{figure}{3}
	\begin{figure}[h]
		\centering
		\includegraphics[width=0.95\columnwidth]{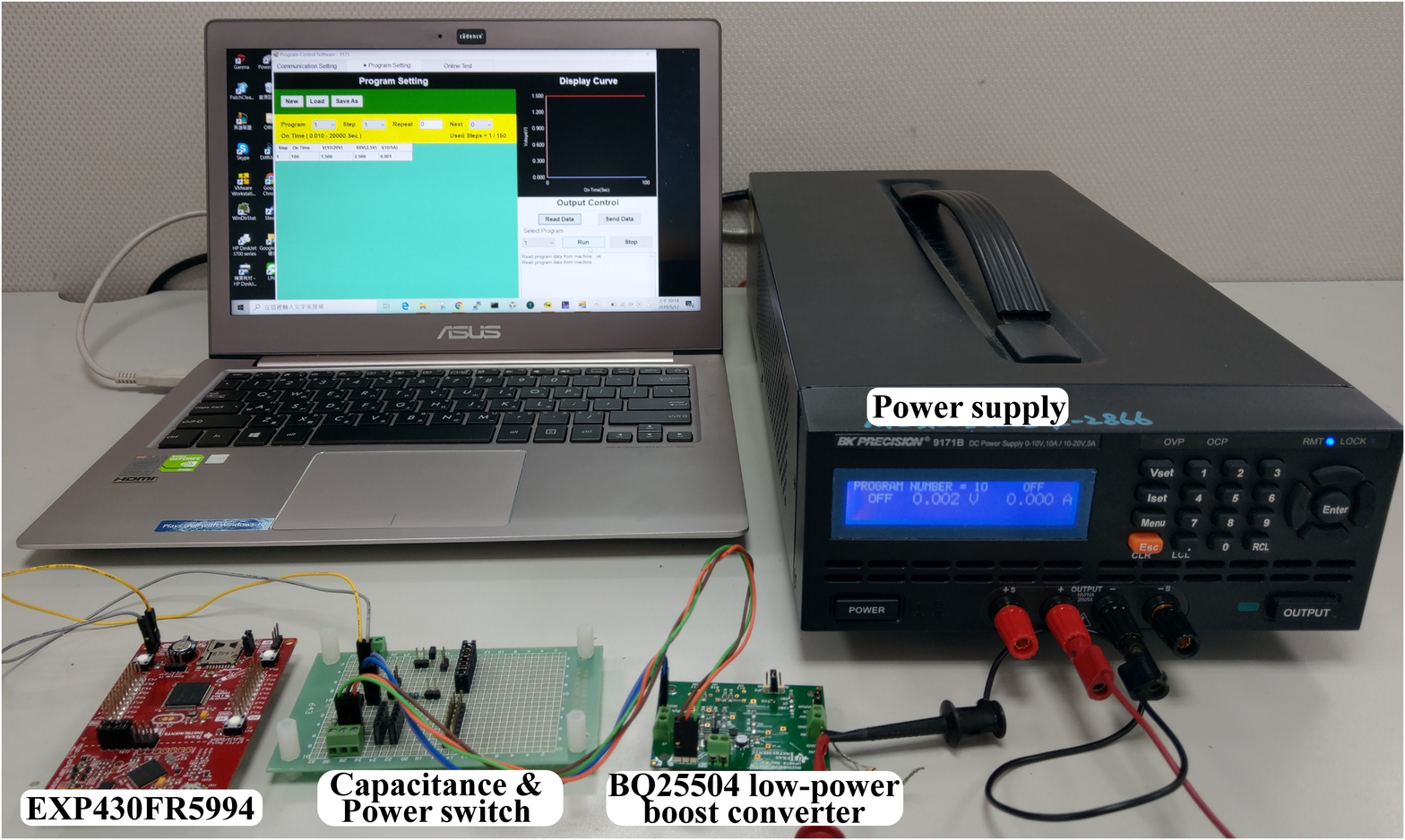}\vspace{-0.0in}
		\caption{The experimental environment.}\label{fig:ExperimentEnviroment}
	\end{figure}
	\setcounter{figure}{4}

	Given that self-powered devices typically run simple applications for data collection and processing, we ported four tasks from the benchmarks\footnote{MSP430 Competitive Benchmarking is a collection of applications used to evaluate different aspects of the microcontroller's performance.} provided by Texas Instruments and implemented one task to encrypt and transmit resultant data to an external device. Specifically, the four tasks respectively perform matrix multiplication, floating-point arithmetic, integer arithmetic, and a finite impulse response filter based on given inputs, and then commit their computation results to four respective data objects. The last task reads all the four data objects, performs SHA-256 (a secure hash algorithm) to encrypt the data, and transits the encrypted data to an external device via a universal asynchronous receiver-transmitter (UART) interface. Thus, the four data objects are read, written, and committed by the five concurrently executed tasks.
	
	To derive the low voltage threshold, $V_{th}$, which is sufficient to successfully switch out a running lengthy task before a power failure, we used a profiling tool, namely EnergyTrace Technology~\cite{energytrace} provided by Texas Instrument. Based on our measurement, the maximum power consumption $P$ is up to 5.25 mW when these tasks are concurrently executed on the platform. Alternatively, the maximum power consumption can be directly obtained from the specifications of the used platform and external modules. As discussed in Section~\ref{sssec:voltagedetect}, once a low-voltage interrupt is triggered, the remaining energy must be greater than $P \times T_{cs}$, where the context switch period $T_{cs}$ is 1 ms in FreeRTOS. Thus, the remaining energy must be greater than 5.25 ${\mu J}$ and, according to the platform specifications in Table~\ref{table:msp}, $V_{th}$ can be derived and set as $2.42$ V.

	To ensure the serializability of task execution and allow instant recovery, our design needs to validate the data access operations made by tasks and maintain data in hybrid memory at runtime. We first evaluated the overall additional costs incurred by our design, by comparing the forward progress (i.e., the number of finished tasks per second) achieved by our design and native FreeRTOS, when the device is powered with a \emph{stable} power supply. Then, we conducted breakdown analysis on the costs. We measured the time and space costs required by our design, which requires additional computation time and memory space to respectively invoke data access operations and record the task attributes. Moreover, because our design accumulates the computation progress of lengthy tasks at the cost of increased execution time and energy consumption due to NVM access latency, we measured the execution time and energy consumption of each task when its context is allocated in VM or NVM.

	To gain more insights into our design, which achieves data consistency without runtime checkpointing and system logging, we compared the performance of our design to that of the \textit{system-wise checkpointing}~\cite{6733152} and \textit{logging-based checkpointing}~\cite{databaselogging} approaches described in Section~\ref{ssec:example}, respectively denoted as \textit{SYS} and \textit{LOG}. To explore the impact of different checkpointing periods, we measured the performance achieved by SYS and LOG when they perform checkpointing frequently (i.e., every 20ms) and infrequently (i.e., every 200ms). Note that LOG and SYS adopt our validation procedure to ensure the serializability of concurrent task execution, because they originally do not consider task concurrency. All tasks were run repeatedly, and the number of finished tasks per second (i.e., forward progress) was adopted as the performance metric. Finally, to explore the runtime overheads incurred to enable intermittent computing, we measured the suspension time, the recovery time, and the data recentness achieved by \textit{SYS}, \textit{LOG}, and our design. These runtime overheads affect the forward progress and data quality when the system suffers from frequent checkpointing and recovery due to unstable power supply.

	\subsection{Experimental Results}
	
	\subsubsection{Cost measurement}
	
	Our design enables an embedded operating system to achieve serializability and data consistency at the cost of additional overheads. Figure~\ref{fig:costStable} shows that the forward progress achieved by FreeRTOS with our design integrated is reduced by 6.9\%. Note that native FreeRTOS does not guarantee the serializability of task execution, so the resultant values of data objects could be unpredictable, and the permanent version in NVM could become inconsistent after power failures. In contrast, our design provides the read, write, and commit operations to ensure serializability, while maintaining task contexts and data copies in hybrid memory to achieve data consistency. To analyze the costs, we investigated the average computation time of each operation, as well as the additional memory space required respectively by the data manager and recovery handler. We also evaluated the average execution time and energy consumption required by each task when its context is allocated in VM or NVM.

	\begin{figure}[h]
		\centering
		\includegraphics[width=0.8\columnwidth]{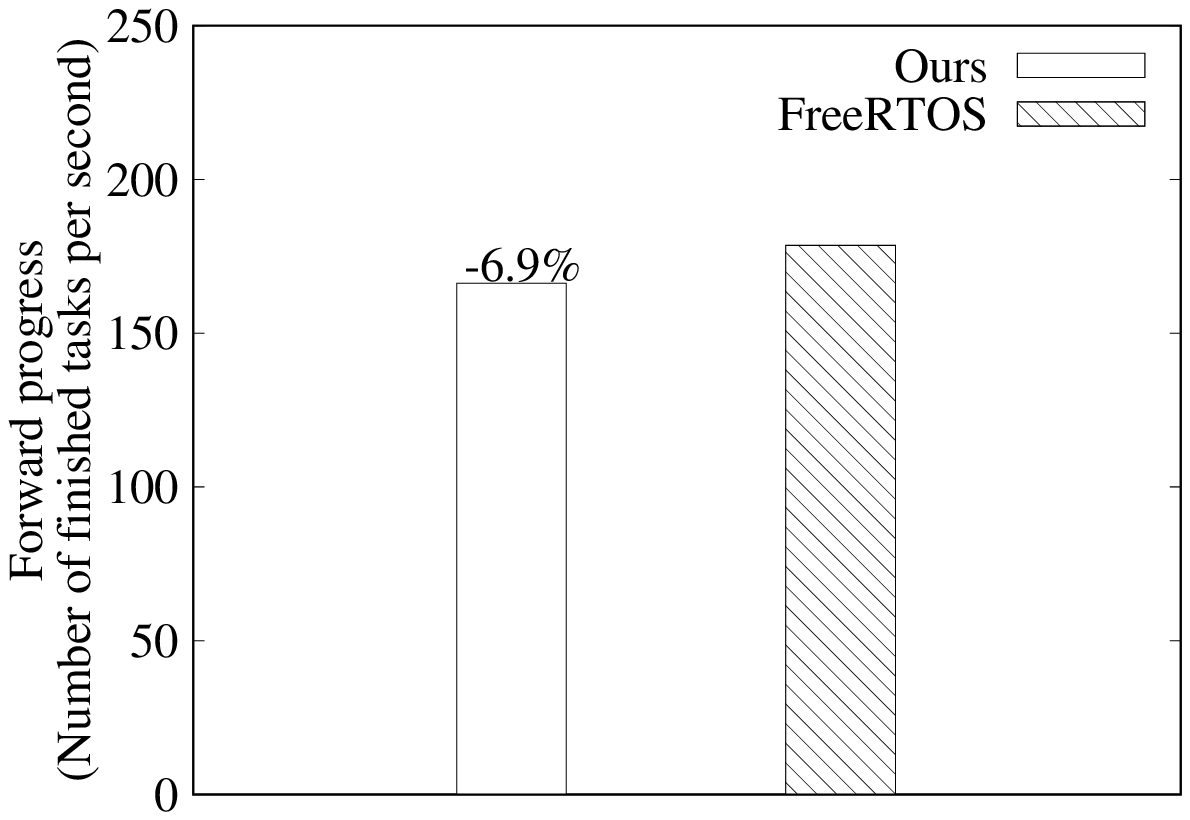}\vspace{-2em}
		\caption{Forward progress achieved by FreeRTOS with and without our design.}\label{fig:costStable}
	\end{figure}

	\begin{table}[h]
		\small
		\centering
		\begin{tabular}{|c|c|l|c|c|c|l|}
			\hline
			& \multicolumn{2}{c|}{Read} & \multicolumn{2}{c|}{Write} & \multicolumn{2}{c|}{Commit} \\ \hline
			Average execution time  & \multicolumn{2}{c|}{48 $\mu$s}   & \multicolumn{2}{c|}{65 $\mu$s}    & \multicolumn{2}{c|}{93 $\mu$s}    \\ \hline \hline
			& \multicolumn{3}{c|}{Data manager}        & \multicolumn{3}{c|}{Recovery handler}     \\ \hline
			Additional memory usage & \multicolumn{3}{c|}{5378 bytes}          & \multicolumn{3}{c|}{164 bytes}            \\ \hline
		\end{tabular}
		\vspace{0.5em}
		\caption{Execution time required by our data access operations and additional memory space used by our design.}\label{tbl:overhead}
	\end{table}

	Table~\ref{tbl:overhead} lists the average execution time of each operation, where the incurred cost in terms of average execution time is 48, 65, and 93 $\mu$s respectively for the read, write, and commit operations. The commit operation requires relatively more time than the read and write operations, because the data manager needs to validate the serializability of task execution and update the data structure which ensures the atomicity of the commit operation. By contrast, the read operation only accesses the addresses of data copies from the address maps, and the write operation only modifies working versions with copy-on-write. However, compared to the task execution time which is in a range of a few to hundreds of milliseconds, runtime overheads incurred by these operations are almost negligible. Moreover, our design uses an additional 5378 bytes and 164 bytes over the 256KB + 8KB memory space to respectively store the data structures maintained by the data manager and task attributes recorded by the recovery handler. Thus, both the time and space costs of the operations are justifiable.
	
	\begin{table}[h]
		\small
		\centering
		\begin{tabular}{|c|c|c|c|c|c|}
			\hline
			& MatMul & FIR filter  & SHA256 & Float math & Int. math\\ \hline
			VM  &  439 ms      &   336 ms  &   246 ms         &    1.89 ms        &    1.5 ms             \\ \hline
			NVM &   470 ms     &  352 ms   &   265 ms          &   1.9 ms         &    1.53 ms           \\ \hline
		\end{tabular}
		\vspace{0.2em}
		\caption{Execution time required by each task.}\label{tbl:tasktime}
		\vspace{-0.5em}
	\end{table}
	
	\begin{table}[h]
		\small
		\centering
		\begin{tabular}{|c|c|c|c|c|c|}
			\hline
			& MatMul & FIR filter  & SHA256 & Float math & Int. math \\ \hline
			VM  &  1.67 mJ           &   1.44 mJ   &   1.04 mJ         &  5.6 $\mu J$         &    4.3 $\mu J$     \\ \hline
			NVM &  2.21 mJ           &   1.56 mJ  &    1.37 mJ         &  5.7 $\mu J$          &    4.4 $\mu J$        \\ \hline
		\end{tabular}
		\vspace{0.2em}
		\caption{Energy consumption required by each task.}\label{tbl:taskenergy}
	\end{table}

	Tables~\ref{tbl:tasktime} and~\ref{tbl:taskenergy} respectively show the average execution time and energy consumption required by each task when its context is allocated in VM or NVM. Overall, the execution time of a task will be increased by between 2\% and 7\% when its context is allocated in NVM than in VM, and the energy consumed by a task will be increased by between 2\% and 32\%. The result is as expected because accessing NVM requires more energy and time than accessing VM. Consequently, when a task is deemed lengthy, the additional cost of a memory-intensive task (e.g., matrix multiplication) is higher than that of a computation-intensive task (e.g., integer arithmetic) due to frequent memory access. Note that our design allocates the context of a task in NVM only when the task is deemed lengthy to preserve its computation progress across power cycles. In our experimental settings, three tasks which respectively perform a finite impulse response filter, matrix multiplication, and SHA-256, are often deemed lengthy because they cannot be finished within a power-on period under the weak power source.

	\begin{figure}[h]
		\centering
		\captionsetup{justification=centering}
		\subfigure[Non-lengthy tasks.]{
			\label{fig:NonLengthyStrong}
			\includegraphics[width=0.95\columnwidth]{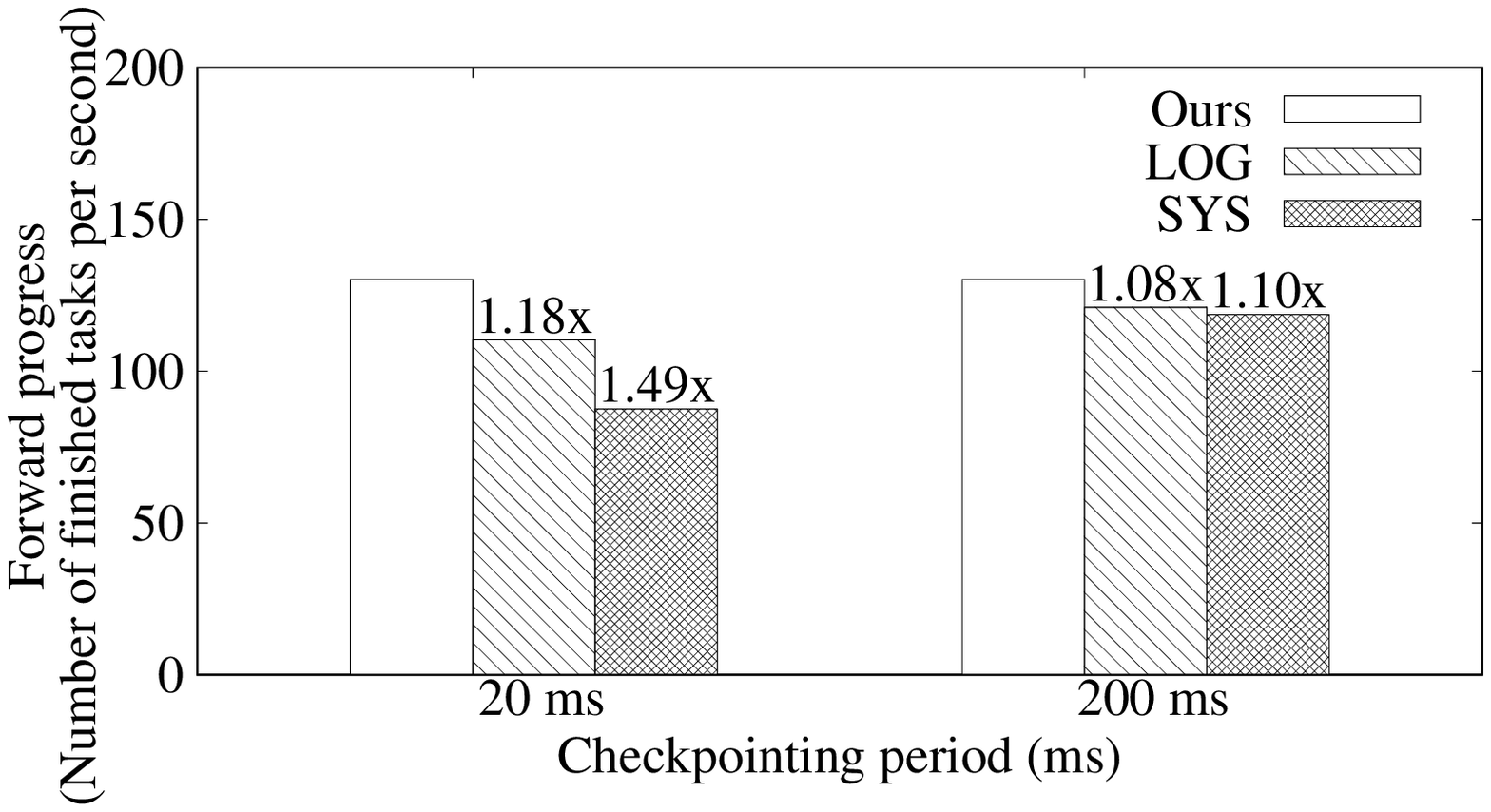}
		}
		\subfigure[Lengthy tasks.]{
			\label{fig:LengthyStrong}
			\includegraphics[width=0.95\columnwidth]{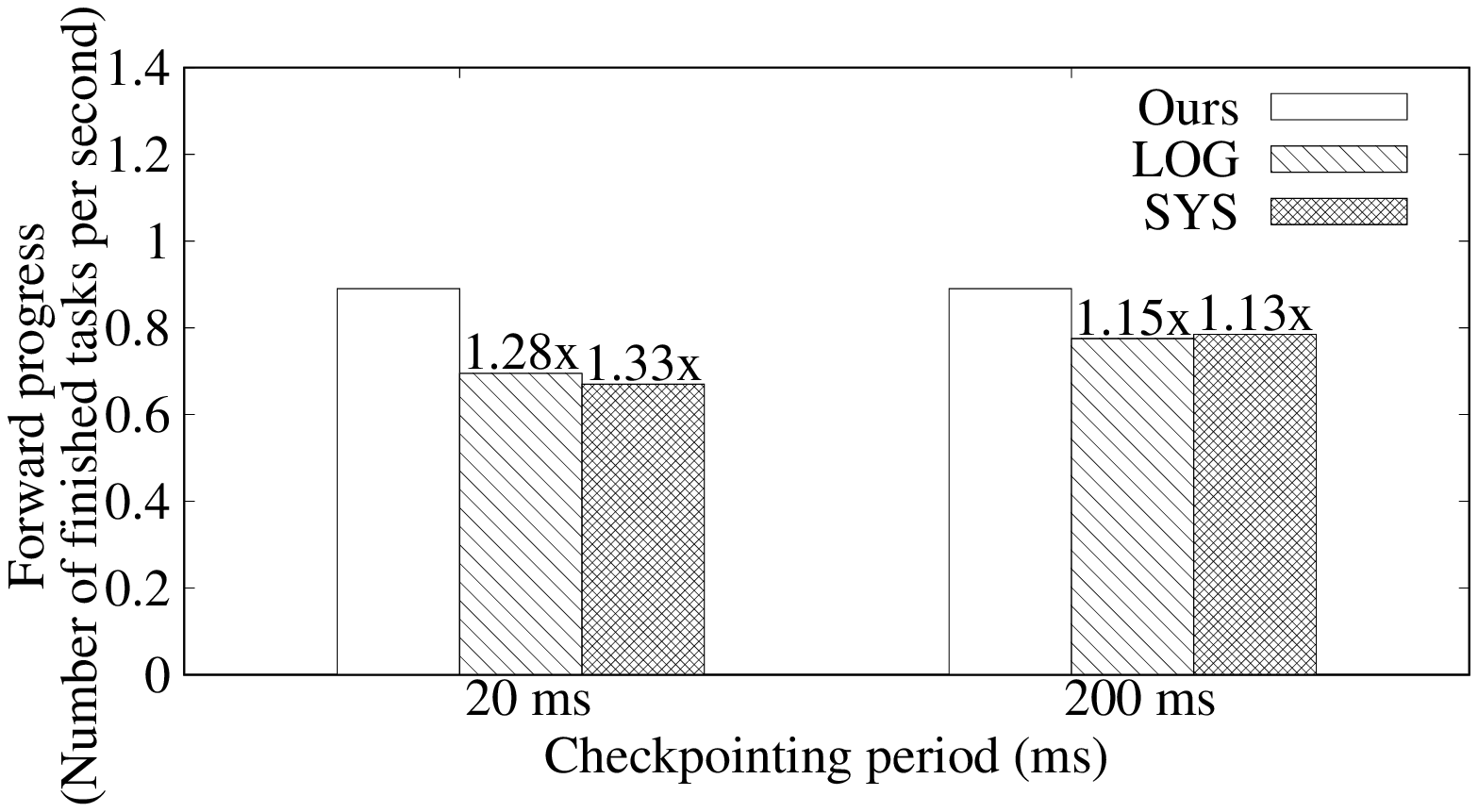}
		}
		\caption{Forward progress achieved by our design, SYS, and LOG under the strong power source.}
		\label{fig:Strong}
	\end{figure}

	\begin{figure}[h]
		\centering
		\captionsetup{justification=centering}
		\subfigure[Non-lengthy tasks.]{
			\label{fig:NonLengthyWeak}
			\includegraphics[width=0.95\columnwidth]{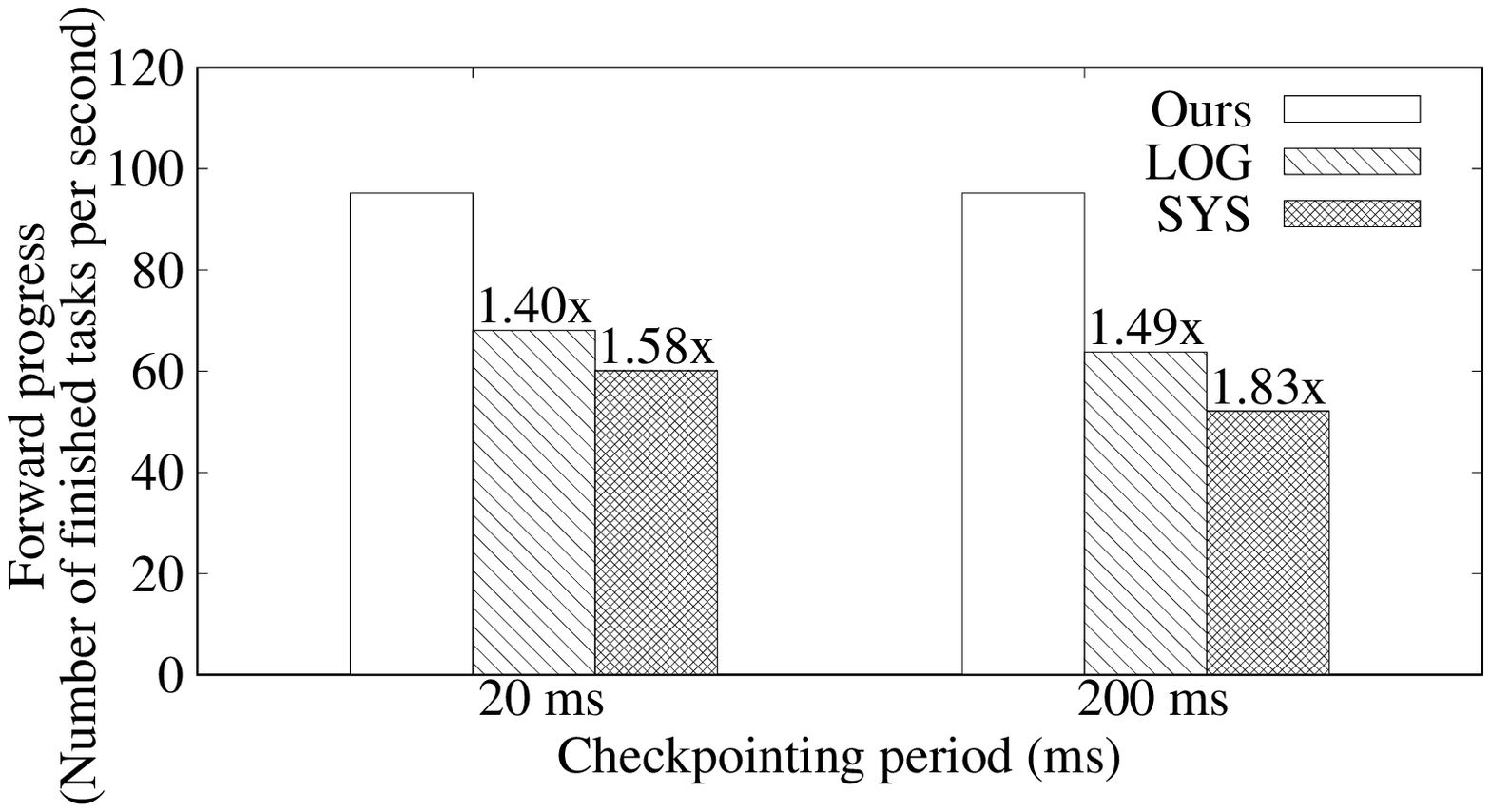}
		}
		\subfigure[Lengthy tasks.]{
			\label{fig:LengthyWeak}
			\includegraphics[width=0.95\columnwidth]{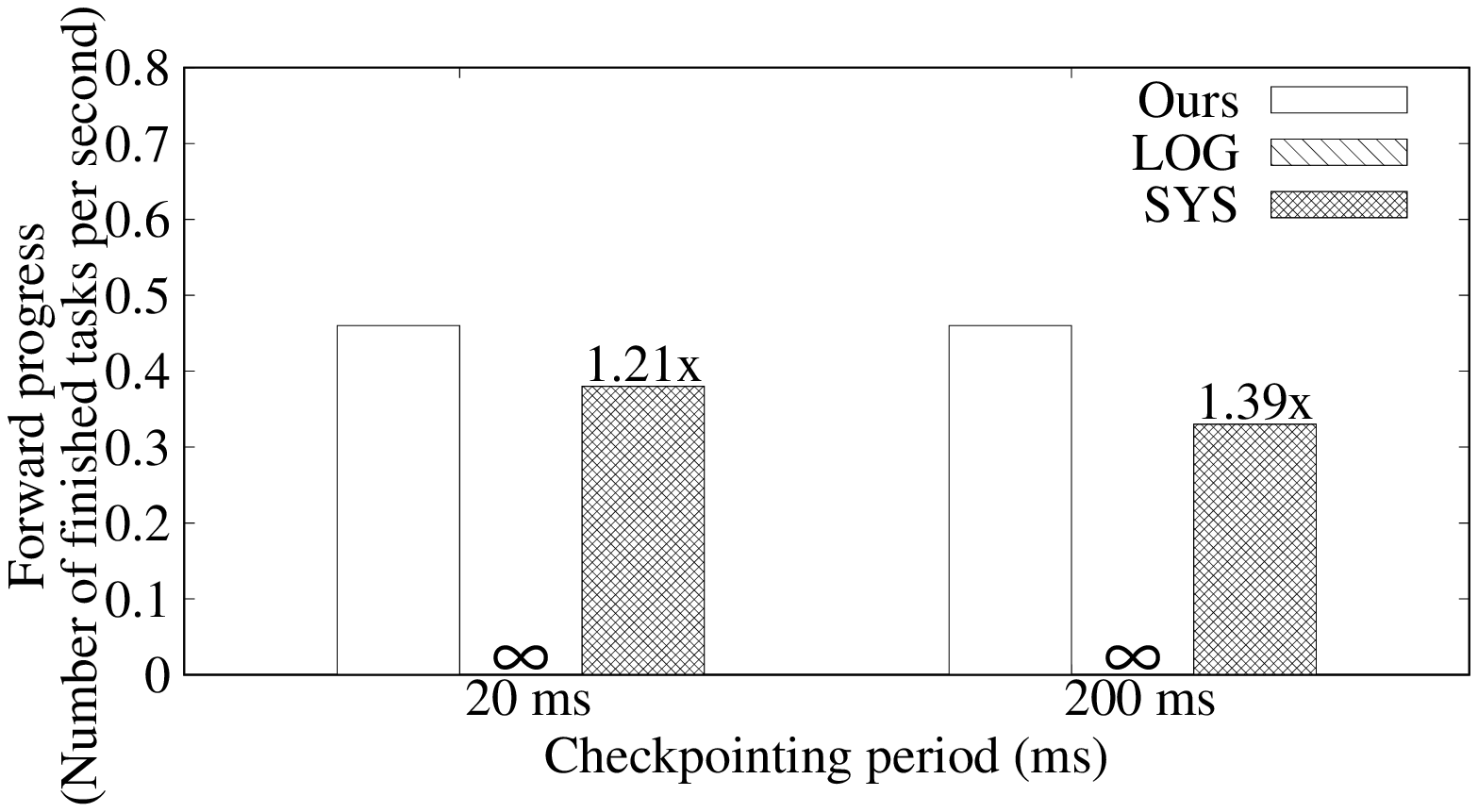}
		}
		\caption{Forward progress achieved by our design, SYS, and LOG under the weak power source.}
		\label{fig:Weak}
	\end{figure}
	
	\begin{table*}[h]
		\renewcommand\arraystretch{1}
		\centering
		\small
		\begin{tabular}{|c|c|c|c|c|c|}
			\hline
			& Ours     & SYS (20ms) & SYS (200ms)& LOG (20ms) & LOG (200ms)\\ \hline
			Suspension time (ms) & 0     & \multicolumn{2}{c|}{7.5 } & \multicolumn{2}{c|}{3.2 } \\ \hline
			Recovery time (ms) & 0.6      & \multicolumn{2}{c|}{7.6 }  & \multicolumn{2}{c|}{7 }  \\ \hline
			Data recentness (ms) & 4.7  & 10.3    &  97.8    & 10.4     &  99.6 \\ \hline
		\end{tabular}
		\vspace{0.5em}
		\caption{Average checkpoint time, recovery time, and data recentness achieved by our design, SYS, and LOG.}\label{tbl:resumption}
		\vspace{-0.5em}
	\end{table*}
	
	\subsubsection{Forward progress}
	
	Figures~\ref{fig:NonLengthyStrong} and~\ref{fig:LengthyStrong} respectively show the forward progress of non-lengthy and lengthy tasks achieved by our design, LOG, and SYS when the device is powered by the strong power source. In general, our design outperforms SYS and LOG for both long or short checkpointing periods. The forward progress achieved by our design is 1.1 to 1.49 times that achieved by SYS and is 1.08 to 1.28 times that achieved by LOG. The improved forward progress is mainly because our design eliminates the runtime overheads of snapshot checkpointing and data logging, which are respectively required by SYS and LOG to preserve forward progress and maintain data consistency. Therefore, when the short checkpointing period is adopted, our design achieves more progress improvement by eliminating the overheads frequently incurred by the checkpointing-based approaches.
	
	Figures~\ref{fig:NonLengthyWeak} and~\ref{fig:LengthyWeak} respectively show the forward progress of non-lengthy and lengthy tasks achieved by different approaches when the weak power source is adopted. For non-lengthy tasks, as shown in Figure~\ref{fig:NonLengthyWeak}, our design achieves 1.58 to 1.83 times forward progress achieved by SYS and 1.4 to 1.49 times forward progress achieved by LOG. The efficacy of our design becomes more manifest when the power supply is relatively unstable because our design enables instant system recovery, whereas SYS and LOG respectively need to restore the system snapshot from NVM to VM and traverse logs in NVM to maintain data consistency during system recovery. For lengthy tasks, as shown in Figure~\ref{fig:LengthyWeak}, LOG makes no forward progress because lengthy tasks cannot be finished within a power-on period and will be rolled back to the outset after power resumption. Compared with SYS, our design achieves 1.21 to 1.39 times more forward progress. The improved progress is because our design allocates the contexts of lengthy tasks in NVM to preserve their progress across power-off periods and switches them out before power failures, incurring less overheads compared to checkpointing the entire system snapshot from VM to NVM at runtime. Note that because a power-on period is usually much longer than the context switch period (e.g., every 1 ms in the FreeRTOS version used for our implementation) and lengthy tasks will only be switched out in a context switch period towards the end of a power-on period, lengthy tasks will still be executable in a large portion of the power-on period.
	
	Comparing Figures~\ref{fig:Strong} and~\ref{fig:Weak}, when the power supply is relatively stable, both SYS and LOG achieve more forward progress with a longer checkpointing period than with a short checkpointing period. However, when the power supply is relatively unstable, the forward progress achieved with a longer checkpointing period decreases more substantially than with a shorter checkpointing period because more uncheckpointed progress could be lost under power failure conditions. This also raises a robustness consideration because the performance of checkpointing-based approaches is highly dependent on the relationship between the checkpointing period and the power failure period.
	
	\subsubsection{Runtime overhead}

	To enable intermittent computing, the system may be suspended at runtime for checkpointing and take some time to recover after power resumption, incurring additional runtime overheads. As to the runtime overheads incurred by checkpointing and recovery, we measured the average time required for system suspension during each checkpointing, the average time required for system recovery after power resumption (i.e., when the first task can be run after power resumption), and the average time required to complete a non-lengthy or lengthy task. Moreover, the recentness of data objects after recovery (i.e., the time difference between the last data update and the recovered system) was measured to evaluate the quality of data over intermittent execution.

	As shown in Table~\ref{tbl:resumption}, compared to SYS and LOG, our design completely eliminates runtime suspension, which respectively takes 7.5 and 3.2 ms for SYS and LOG. By eliminating the time required to restore the system snapshots back from or log traversing in NVM, our design reduces the recovery time required by SYS and LOG respectively from 7.6 and 7 ms to 0.6 ms, a reduction of at least 90\%. Our design achieves a shorter recovery because, after power resumption, it simply reruns unfinished non-lengthy tasks and adds unfinished lengthy tasks into the ready queues based on their attributes maintained by the recovery handler in NVM.
	Moreover, our design significantly improves the data recentness achieved by SYS and LOG, and the improvement is more manifest as the checkpointing period increases because SYS and LOG will roll the data back to an older version after power resumption.
	
	To sum up, extensive experiments based on a prototype system running real tasks demonstrate that our design not only ensures data consistency but also outperforms checkpointing-based approaches in terms of the forward progress, the checkpoint time, the recovery time, and the data recentness after power resumption. This also suggests that our design is particularly suitable for self-powered devices which may suffer from frequent power failures.

	\section{Concluding Remarks}\label{sec:conclusion}
	We present a failure-resilient design, which employs a data manager and a recovery handler, to endow intermittent systems with concurrent task execution, data consistency without runtime suspension, instant system recovery, and stagnation-free computation. The data manager maintains serializability of concurrent task execution by controlling the data access operations conducted by tasks while ensuring data consistency by atomically committing data copies modified by finished tasks in VM to persistently consistent versions in NVM. In contrast to checkpointing-based approaches which require frequent system suspension to back up volatile data at runtime, the persistently consistent version allows the recovery handler to instantly recover the system by rerunning all unfinished tasks whose progress is lost in VM due to power failures, thereby eliminating the time required to restore the system to a previously checkpointed state. Moreover, to accumulate the progress of lengthy tasks across power cycles, the data manager allocates the data and contexts of lengthy tasks in NVM, and the recovery handler allows these tasks to instantly resume their executions based on their states preserved in NVM after power resumption. We implemented the data manager and the recovery handler on top of the memory management and the task scheduler in FreeRTOS. The results of experiments conducted on a Texas Instruments EXP430FR5994 LaunchPad show that our design significantly increases the forward progress achieved by system-wise checkpointing~\cite{6733152} and logging-based checkpointing~\cite{databaselogging} approaches. It also suggests that our design is particularly appropriate for self-powered devices suffering frequent power disruptions because the design guarantees data consistency while reducing the runtime overhead and the recovery time.
	

	To further improve forward progress, future work will seek to extend our design to consider not only data objects but also compiled program code of tasks. This will raise opportunities for, as well as challenges to, designing a memory allocation policy which considers the trade-off between the overhead incurred by copying program code from NVM to VM and performance improvements by running programs on VM.

	\section*{Acknowledgement}
	
	This work was supported in part by the Ministry of Science and Technology, Taiwan, under grant 107-2628-E-001-001-MY3. 
	\bibliographystyle{abbrv}
	\bibliography{ref}

\begin{thebibliography}{10}

\bibitem{energytrace}
{EnergyTrace Technology}.
\newblock \url{https://www.ti.com/tool/ENERGYTRACE}.

\bibitem{FreeRTOS}
{FreeRTOS}.
\newblock \url{https://www.freertos.org/}.

\bibitem{EXP430FR5994}
{MSP-EXP430FR5994}.
\newblock \url{http://www.ti.com/tool/MSP-EXP430FR5994}.

\bibitem{8585137}
G.~{Berthou}, T.~{Delizy}, K.~{Marquet}, T.~{Risset}, and G.~{Salagnac}.
\newblock {Sytare: A Lightweight Kernel for NVRAM-Based Transiently-Powered
  Systems}.
\newblock {\em IEEE Trans. on Computers}, pages 1390--1403, 2019.

\bibitem{7750989}
P.~Bogdan, M.~Pajic, P.~P. Pande, and V.~Raghunathan.
\newblock {Making the Internet-of-Things a Reality: From Smart Models, Sensing
  and Actuation to Energy-Efficient Architectures}.
\newblock In {\em Proc. of IEEE/ACM CODES+ISSS}, pages 1--10, 2016.

\bibitem{ICCAD2019}
W.-M. Chen, Y.~Chen, P.-C. Hsiu, and T.-W. Kuo.
\newblock {Multiversion Concurrency Control on Intermittent Systems}.
\newblock In {\em Proc. of IEEE/ACM ICCAD}, 2019.

\bibitem{7809860}
W.-M. Chen, T.-S. Cheng, P.-C. Hsiu, and T.-W. Kuo.
\newblock {Value-Based Task Scheduling for Nonvolatile Processor-Based Embedded
  Devices}.
\newblock In {\em Proc. IEEE RTSS}, pages 247--256, 2016.

\bibitem{RTAS2019}
J.~{Choi}, H.~{Joe}, Y.~{Kim}, and C.~{Jung}.
\newblock {Achieving Stagnation-Free Intermittent Computation with
  Boundary-Free Adaptive Execution}.
\newblock In {\em Proc. of IEEE RTAS}, pages 331--344, 2019.

\bibitem{Gray:1992:TPC:573304}
J.~Gray and A.~Reuter.
\newblock {\em {Transaction Processing: Concepts and Techniques}}.
\newblock Morgan Kaufmann Publishers Inc., 1st edition, 1992.

\bibitem{6733152}
{H. Jayakumar and A. Raha and V. Raghunathan}.
\newblock Quickrecall: A low overhead hw/sw approach for enabling computations
  across power cycles in transiently powered computers.
\newblock In {\em International Conference on VLSI Design and International
  Conference on Embedded Systems}, pages 330--335, 2014.

\bibitem{ISLPED2018Kang}
C.-K. Kang, C.-H. Lin, P.-C. Hsiu, and M.-S. Chen.
\newblock {HomeRun: HW/SW Co-Design for Program Atomicity on Self-Powered
  Intermittent Systems}.
\newblock In {\em Proc. of IEEE/ACM ISLPED}, pages 29:1--29:6, 2018.

\bibitem{7167369}
Q.~Li, M.~Zhao, J.~Hu, Y.~Liu, Y.~He, and C.~J. Xue.
\newblock {Compiler Directed Automatic Stack Trimming for Efficient
  Non-volatile Processors}.
\newblock In {\em Proc. of IEEE/ACM DAC}, pages 1--6, 2015.

\bibitem{7544395}
Z.~{Li}, Y.~{Liu}, D.~{Zhang}, C.~J. {Xue}, Z.~{Wang}, X.~{Shi}, W.~{Sun},
  J.~{Shu}, and H.~{Yang}.
\newblock {HW/SW co-design of nonvolatile IO system in energy harvesting sensor
  nodes for optimal data acquisition}.
\newblock In {\em Proc. of IEEE/ACM DAC}, pages 1--6, 2016.

\bibitem{8824923}
Y.-C. {Lin}, P.-C. Hsiu, and T.-W. Kuo.
\newblock {Autonomous I/O for Intermittent IoT Systems}.
\newblock In {\em Proc. of IEEE/ACM ISLPED}, pages 1--6, 2019.

\bibitem{7547183}
Q.~Liu and C.~Jung.
\newblock {Lightweight Hardware Support for Transparent Consistency-aware
  Checkpointing in Intermittent Energy-harvesting Systems}.
\newblock In {\em Proc. of IEEE NVMSA}, pages 1--6, 2016.

\bibitem{Liu2747910}
Y.~Liu, Z.~Li, H.~Li, Y.~Wang, X.~Li, K.~Ma, S.~Li, M.-F. Chang, S.~John,
  Y.~Xie, J.~Shu, and H.~Yang.
\newblock {Ambient Energy Harvesting Nonvolatile Processors: From Circuit to
  System}.
\newblock In {\em Proc. of IEEE/ACM DAC}, pages 150:1--150:6, 2015.

\bibitem{Lucia2737978}
B.~Lucia and B.~Ransford.
\newblock {A Simpler, Safer Programming and Execution Model for Intermittent
  Systems}.
\newblock In {\em Proc. of ACM PLDI}, pages 575--585, 2015.

\bibitem{Ma3077575}
K.~Ma, X.~Li, H.~Liu, X.~Sheng, Y.~Wang, K.~Swaminathan, Y.~Liu, Y.~Xie,
  J.~Sampson, and V.~Narayanan.
\newblock {Dynamic Power and Energy Management for Energy Harvesting
  Nonvolatile Processor Systems}.
\newblock {\em ACM Trans. Embed. Comput. Syst.}, pages 107:1--107:23, 2017.

\bibitem{Maeng:2017:AIE:3152284.3133920}
K.~Maeng, A.~Colin, and B.~Lucia.
\newblock {Alpaca: Intermittent Execution Without Checkpoints}.
\newblock {\em Proc. ACM Program. Lang.}, pages 96:1--96:30, 2017.

\bibitem{EDisplay}
H.~R. Mendis and P.-C. Hsiu.
\newblock {Accumulative Display Updating for Intermittent Systems}.
\newblock In {\em Proc. of IEEE/ACM CODES+ISSS}, 2019.

\bibitem{SuddenPower}
N.~Onizawa, A.~Mochizuki, A.~Tamakoshi, and T.~Hanyu.
\newblock {Sudden Power-Outage Resilient In-Processor Checkpointing for
  Energy-Harvesting Nonvolatile Processors}.
\newblock {\em IEEE Transactions on Emerging Topics in Computing}, pages
  151--163, 2017.

\bibitem{databaselogging}
M.~\"{O}zsu and P.~Valduriez.
\newblock {\em {Principles of Distributed Database Systems}}.
\newblock NY: Springer, 3st edition, 2011.

\bibitem{Pan3081038}
C.~Pan, M.~Xie, Y.~Liu, Y.~Wang, C.~J. Xue, Y.~Wang, Y.~Chen, and J.~Hu.
\newblock {A Lightweight Progress Maximization Scheduler for Non-volatile
  Processor Under Unstable Energy Harvesting}.
\newblock In {\em Proc. of ACM SIGPLAN/SIGBED}, pages 101--110, 2017.

\bibitem{Ransford:2014:NMB:2618128.2618136}
B.~Ransford and B.~Lucia.
\newblock {Nonvolatile Memory is a Broken Time Machine}.
\newblock In {\em Proc. of MSPC}, pages 5:1--5:3, 2014.

\bibitem{Ransford2011}
B.~Ransford, J.~Sorber, and K.~Fu.
\newblock {Mementos: System Support for Long-running Computation on RFID-scale
  Devices}.
\newblock {\em SIGARCH Comput. Archit. News}, pages 159--170, 2011.

\bibitem{8519610}
W.~{Song}, Y.~{Zhou}, M.~{Zhao}, L.~{Ju}, C.~J. {Xue}, and Z.~{Jia}.
\newblock {EMC: Energy-Aware Morphable Cache Design for Non-Volatile
  Processors}.
\newblock {\em IEEE Trans. on Computers}, pages 498--509, 2019.

\bibitem{6865319}
Y.~Wang, H.~Jia, Y.~Liu, Q.~Li, C.~J. Xue, and H.~Yang.
\newblock {Register Allocation for Hybrid Register Architecture in Nonvolatile
  Processors}.
\newblock In {\em Proc. of IEEE ISCAS}, pages 1050--1053, 2014.

\bibitem{6341281}
Y.~Wang, Y.~Liu, S.~Li, D.~Zhang, B.~Zhao, M.~F. Chiang, Y.~Yan, B.~Sai, and
  H.~Yang.
\newblock {A 3us Wake-up Time Nonvolatile Processor Based on Ferroelectric
  Flip-flops}.
\newblock In {\em Proc. of IEEE ESSCIRC}, pages 149--152, 2012.

\bibitem{Xie:2018:ADI:3184476.3182170}
M.~Xie, C.~Pan, M.~Zhao, Y.~Liu, C.~J. Xue, and J.~Hu.
\newblock {Avoiding Data Inconsistency in Energy Harvesting Powered Embedded
  Systems}.
\newblock {\em ACM Trans. Des. Autom. Electron. Syst.}, pages 38:1--38:25,
  2018.

\bibitem{ConsistencyAware}
M.~Xie, M.~Zhao, C.~Pan, J.~Hu, Y.~Liu, and C.~J. Xue.
\newblock {Fixing the Broken Time Machine: Consistency-aware Checkpointing for
  Energy Harvesting Powered Non-volatile Processor}.
\newblock In {\em Proc. of IEEE/ACM DAC}, pages 1--6, 2015.

\bibitem{7167310}
D.~Zhang, Y.~Liu, X.~Sheng, J.~Li, T.~Wu, C.~J. Xue, and H.~Yang.
\newblock {Deadline-aware Task Scheduling for Solar-powered Nonvolatile Sensor
  Nodes with Global Energy Migration}.
\newblock In {\em Proc. of IEEE/ACM DAC}, pages 1--6, 2015.

\bibitem{7753544}
M.~Zhao, K.~Qiu, Y.~Xie, J.~Hu, and C.~J. Xue.
\newblock {Redesigning Software and Systems for Non-volatile Processors on
  Self-powered Devices}.
\newblock In {\em Proc. of IFIP/IEEE VLSI-SoC}, pages 1--6, 2016.

\end{thebibliography}
\end{document}